\documentclass[12pt,oneside]{article}
\usepackage[immediate]{silence}
\WarningFilter[temp]{latex}{Command \underbar  has changed.} \WarningFilter[temp]{latex}{Command \underline  has changed.}
\usepackage{sectsty}
\usepackage[letterpaper,top=1in, bottom=1in, left=1in, right=1in]{geometry}
\usepackage{setspace}

\usepackage{microtype}

\usepackage{ulem}
\usepackage{bm}
\usepackage{amsfonts}
\usepackage{amsmath}
\usepackage{mathtools}
\usepackage{amsthm}
\usepackage{thmtools}
\usepackage{textcomp}
\usepackage{amssymb}
\usepackage[svgnames]{xcolor}
\usepackage[sc,labelsep=period]{caption}
\allowdisplaybreaks

\usepackage{graphicx}
\usepackage{csquotes}
\usepackage[authoryear,round,longnamesfirst,numbers]{natbib}
\usepackage{hyperref}
\hypersetup{
pdftitle={Self-Enforced Job Matching},
pdfauthor={Ce Liu, Ziwei Wang, and Hanzhe Zhang},
pdfkeywords={dynamic stability, repeated game, two-sided matching}
}
\hypersetup{
colorlinks,breaklinks,naturalnames,
citecolor=DarkBlue,urlcolor=DarkBlue,linkcolor=DarkBlue
}
\usepackage[capitalize,nameinlink]{cleveref}
\usepackage{titlesec}

\usepackage[inline,shortlabels]{enumitem}
\usepackage{xargs}
\usepackage{comment}
\usepackage{tikz}
\usetikzlibrary{arrows.meta}
\usepackage{booktabs}
\usepackage{array} 
\usepackage{etoolbox}

\bibpunct[, ]{(}{)}{;}{a}{}{,}
\MakeOuterQuote{"}
\theoremstyle{plain}

\newtheorem{assumption}{Assumption}

\newtheorem{lemma}{Lemma}

\newtheorem{proposition}{Proposition}

\theoremstyle{definition}
\newtheorem{definition}{Definition}
\theoremstyle{remark}

\sectionfont{\color{DarkRed}}
\subsectionfont{\color{DarkRed}}
\subsubsectionfont{\color{DarkRed}}
\renewcommand{\paragraph}[1]{{\flushleft \textbf{\color{DarkRed}#1 }}}

\renewcommand\thmcontinues[1]{Continued}

\AtEndEnvironment{proof}{\setcounter{claim}{0}}

\newcommand{\Enviro}{\Phi}
\newcommand{\firms}{\mathcal{F}}
\newcommand{\firm}{f}
\newcommand{\workers}{\mathcal{W}}
\newcommand{\worker}{w}
\newcommand{\types}{\Theta}
\newcommand{\type}{\theta}
\newcommand{\wage}{p}
\newcommand{\smatchings}{M}
\newcommand{\history}{h}
\newcommand{\histories}{\mathcal{H}}

\renewcommand{\Re}{\mathbb{R}}

\newcommand{\abs}[1]{ \left| #1 \right| }

\newcommand{\conv}[1]{\text{co}\left(#1\right)} 
\newcommand{\term}[1]{\textit{#1}}

\newcommand{\fhistories}{\overline{\histories}}

\renewcommand{\emph}[1]{\textit{#1}}
\renewcommand{\exp}{\mathbb{E}}
\renewcommand{\tilde}{\widetilde}
\renewcommand{\hat}{\widehat}

\renewcommand{\omega}{s}
\renewcommand{\Omega}{S}

\DeclareMathOperator*{\argmin}{arg\,min}
\DeclareMathOperator*{\argmax}{arg\,max}
\newcommand*{\supp}{\mathrm{supp}}

\newenvironment{proofof}[1]{\par\bigskip
   \noindent \textbf{Proof of #1.} } {\qed \par \bigskip}

\newcommand{\hchi}{\raise0.4ex\hbox{$\chi$}}

\DeclareFontFamily{U}{mathx}{\hyphenchar\font45}
\DeclareFontShape{U}{mathx}{m}{n}{
      <5> <6> <7> <8> <9> <10>
      <10.95> <12> <14.4> <17.28> <20.74> <24.88>
      mathx10
      }{}
\DeclareSymbolFont{mathx}{U}{mathx}{m}{n}
\DeclareMathSymbol{\bigtimes}{1}{mathx}{"91}

\usepackage[bottom,multiple,splitrule]{footmisc}
\onehalfspacing

\title{Self-Enforced Job Matching\thanks{We thank Nageeb Ali, Mariagiovanna Baccara, Vincent Crawford, Henrique de Oliviera, Jon Eguia, Ed Green, Scott Kominers, Maciej Kotowski, SangMok Lee, Xiao Lin, Elliot Lipnowski, Qingmin Liu, Arijit Mukherjee, Marzena Rostek, Joel Sobel, Quan Wen, Kun Zhang, and audiences at various seminars for their helpful comments.}}

\begin{document}

\author{Ce Liu\thanks{Department of Economics, Michigan State University. E-mail: \href{mailto: celiu@msu.edu}{celiu@msu.edu}.} \and Ziwei Wang\thanks{School of Economics and Management, Wuhan University. E-mail: \href{mailto: zwang.econ@gmail.com}{zwang.econ@gmail.com}.} \and Hanzhe Zhang\thanks{Department of Economics, Michigan State University. E-mail: \href{mailto: hanzhe@msu.edu}{hanzhe@msu.edu}.}}

\date{\today}

\maketitle
\thispagestyle{empty}

\begin{abstract}

\noindent 
The classic two-sided many-to-one job matching model assumes that firms treat workers as substitutes and workers ignore colleagues when choosing where to work. Relaxing these assumptions may lead to nonexistence of stable matchings. However, matching is often not a static allocation, but an ongoing process with long-lived firms and short-lived workers. We show that stability is always guaranteed dynamically when firms are patient, even with complementarities in firm technologies and peer effects in worker preferences. While no-poaching agreements are anti-competitive, they can maintain dynamic stability in markets that are otherwise unstable, which may contribute to their prevalence in labor markets.
\\
Keywords: matching, repeated games, no-poaching agreements\\
JEL: C73

\end{abstract}

\vfill

\clearpage

\setcounter{page}{1}

\setstretch{1.25}

\clearpage

\section{Introduction} \label{section: intro}

Consider two-sided many-to-one matching markets in which each firm can employ workers at individually negotiated wages. In their seminal work, \cite{kelsocrawford1982} demonstrated that stable matchings exist in such markets when firms' production technologies satisfy gross substitutability and workers' preferences do not exhibit peer effects. Gross substitutability holds when no workers serve as complementary inputs in a firm's production, while the no-peer-effect assumption requires that workers are indifferent toward their colleagues' identities and characteristics. This framework has been employed across a wide range of applications, including auctions, labor markets, and housing.

However, many markets feature complementarities and peer effects. In industries such as manufacturing, healthcare, and software development, it is common for tasks to require the collaboration of workers with complementary skills. Also, in various labor markets---e.g., academia---the set of colleagues is an important consideration. Although recent literature has obtained positive existence results by modeling large markets or considering alternative assumptions on preferences, accommodating arbitrary production technologies and peer effects has been challenging.

In this paper, we propose an approach that does not rely on restrictions on market size, technologies, or preferences. We observe that matching is often an ongoing process in which long-lived players on one side of the market interact with short-lived players or objects on the other side over time. This dynamic is salient in many matching situations, including entry-level hiring, buyer-seller relationships, and multi-unit auctions. Given this observation, we consider the stability of matching as an unfolding process rather than a static allocation. We find that wage-suppressing no-poaching agreements always help maintain a dynamically stable matching process when firms are patient, even in the presence of complementarity and peer effects that destabilize static matchings. The key idea is that dynamic incentives can act as both carrot and stick to deter firms' deviations. We provide an illustrative example at the end of this section.

No-poaching agreements appear in various matching markets \citep{krueger2022theory}; e.g., those of IT workers, fast food servers, animation studio artists, and college admissions. These agreements have not only generated considerable controversy, but also come under intense scrutiny in antitrust regulations.\footnote{For their no-poaching practices, tech giants settled for \$415 million (\textit{US v.~Adobe Systems Inc., et al.}), and Duke and UNC paid \$73.5 million (\textit{Seaman v.~Duke University} and \textit{Binotti v.~Duke University}). For the fast food industry, see \textit{Deslandes v.~McDonald’s USA LLC} and \textit{Conrad v.~Jimmy John’s Franchise LLC}.} An ongoing debate concerns universities' use of a ``consensus approach'' to allocate financial aid to students. Despite antitrust exemptions granted to universities under the Improving America's Schools Act of 1994, this practice is currently the subject of civil antitrust litigation (\textit{Henry, et al.~v.~Brown University, et al.}). Our finding adds a new perspective to this debate: In a matching market with significant complementarities and peer effects, no-poaching agreements can help maintain stability in the absence of static stable matchings. Because stability is crucial for the effective functioning of markets \citep{roth2018marketplaces}, prohibiting such agreements could lead to market failure.

From a technical standpoint, our setting differs from standard repeated games due to the cooperative nature of the stage game. In standard repeated games, the noncooperative stage game is guaranteed to have a Nash equilibrium; playing a Nash equilibrium at every history delivers a subgame perfect Nash equilibrium. In our setting, the stage game may not have a stable matching. Therefore, unlike in standard repeated games, the existence of a dynamically stable matching process cannot be guaranteed by simply playing a stable stage matching at every history. Instead, we leverage random serial dictatorship to explicitly construct a dynamically stable matching process.

\paragraph{Related Literature.} 
The literature on static job matching has guaranteed the existence of stable matching by (i) imposing restrictions on preferences \citep{hatfieldmilgrom, sunyang2006, hatfieldkojima2008, pycia2012stability, HatfieldEtAl2013JPE, RostekYoder2020Ecma, KojimaSunYu2020AER, KojimaSunYu2023REStud, PyciaYenmez2023REStud}; (ii) considering large markets \citep{KojimaPathakRoth2013QJE, AshlagiEtAl2014OR, azevedohatfield2018, CheKimKojima2019Ecma}; or (iii) making minimal adjustments to quotas \citep{NguyenVohra2018AER}. In this paper, we propose a different approach based on firms' dynamic incentives.

Our paper also contributes to the literature on dynamic stability in matching: See, for example, \cite*{damianolam2005},
\citet*{du2016rigidity}, \cite*{kadamkotowski20182, kadamkotwoski2018},
\cite*{altinok2020}, \cite*{kotowski2020}, \cite*{kurino2020}, and \citet*{doval2022}. The work most closely related to this paper is \cite{liu2022}, who studies repeated matching markets \textit{without} 
transfers, while imposing assumptions so that static stable matchings exist in the stage game. This paper studies matching markets \textit{with} transfers using a similar solution concept; however, in contrast to \cite{liu2022}, we show that dynamic stability can be restored even when static stability fails. More broadly, \citet{aliliu} explore when and how dynamic incentives deter coalitional deviations in a general repeated cooperative games framework, but focus on settings where all players are long-lived.

\paragraph{An Example.}
Two long-lived firms $\firm_1$ and $\firm_2$ each offer two internship positions every period. Each firm treats workers as complements: It generates a revenue of \$$6$ only when both of its vacancies are filled and is unproductive otherwise. On the other side of the market, every period, three new identical workers---$w_1$, $w_2$, and $w_3$---look for positions. For simplicity, assume that workers' payoffs are equal to the wages they receive. If the matching market in every period is treated as an isolated one-shot interaction, there is no \textit{static} stable matching: In any static matching, at most one firm can be productive, and for the firms to break even, at most one worker can earn more than \$$3$. An unproductive firm can then form a blocking coalition with two workers, each of whom currently earns no more than \$$3$, and split the gain.

Instead of static stable matchings, consider the following history-dependent matching process (illustrated in \Cref{fig:example}). In each period, the market is in one of four possible states: two wage-suppressing collusion states C1 and C2 and two punishment states P1 and P2. Each collusion state has a dominant firm, $f_1$ in C1 and $f_2$ in C2; firms $f_1$ and $f_2$ are punished in states P1 and P2, respectively. The market starts at and remains in state C1 until a firm deviates. Whenever a firm deviates, the market transitions to the punishment state for that firm, where it remains for four periods. After this punishment phase, the market enters the collusion state in which the non-deviating firm assumes the role of the dominant firm, and stays there indefinitely until a firm deviates.

\begin{figure}[ht]
   \captionsetup{width=.75\linewidth}
{    \centering
\begin{tikzpicture}
    \node[shape=circle,draw=black, minimum size=1.6cm,line width=0.5mm, align=center, font=\small] (C1) at (0,2) {C1\\$(4,2)$};
    \node[shape=circle,draw=black, minimum size=1.6cm,line width=0.5mm, align=center, font=\small] (C2) at (5,2) {C2\\$(2,4)$};
    \node[shape=rectangle,draw=black, minimum size=1.4cm,line width=0.5mm, align=center, font=\small] (P1) at (5,-0.5) {P1\\$(0,-6)$};
    \node[shape=rectangle,draw=black, minimum size=1.4cm,line width=0.5mm, align=center, font=\small] (P2) at (0,-0.5) {P2\\$(-6,0)$};
    
    \node[right of=P1, xshift=0.1cm] {$\times 4$};
    \node[left of=P2, xshift=-0.1cm] {$4 \times$};
    
    \path[->,  loop above, out=130, in=50, looseness=3.5, line width=0.5mm, -stealth] (C1) edge (C1);
    \path[->,  loop above, out=130, in=50, looseness=3.5, line width=0.5mm, -stealth] (C2) edge (C2);
    
    \path[->,  line width=0.5mm, out=90, in=-90, looseness=0, -stealth] ([xshift=0.1cm]P2.north) edge ([xshift=0.1cm]C1.south);
    \path[->, line width=0.5mm, out=90, in=-90, looseness=0, -stealth] ([xshift=0.1cm]P1.north) edge ([xshift=0.1cm]C2.south);

    \path[->,red, line width=0.5mm,  -stealth] (C1) edge (P1);
    \path[->,red, line width=0.5mm,  -stealth] ([yshift=-0.1cm]P2.east) edge ([yshift=-0.1cm]P1.west);
    \path[->,red, line width=0.5mm,  -stealth] ([xshift=-0.1cm]C2.south) edge ([xshift=-0.1cm]P1.north);    
    
    \path[->,blue, line width=0.5mm,  -stealth] (C2) edge (P2);
    \path[->,blue, line width=0.5mm,  -stealth] ([yshift=0.1cm]P1.west) edge ([yshift=0.1cm]P2.east);
    \path[->,blue, line width=0.5mm,  -stealth] ([xshift=-0.1cm]C1.south) edge ([xshift=-0.1cm]P2.north);    
\end{tikzpicture}
    \caption{A dynamically stable matching process. The first number in parentheses is the per-period payoff of $f_1$, while the second number is the payoff of $f_2$. Black arrows represent transitions when no firm deviates, red arrows represent transitions after $\firm_1$'s deviations, and blue arrows represent transitions after $\firm_2$'s deviations.\label{fig:example}}
}
\end{figure}
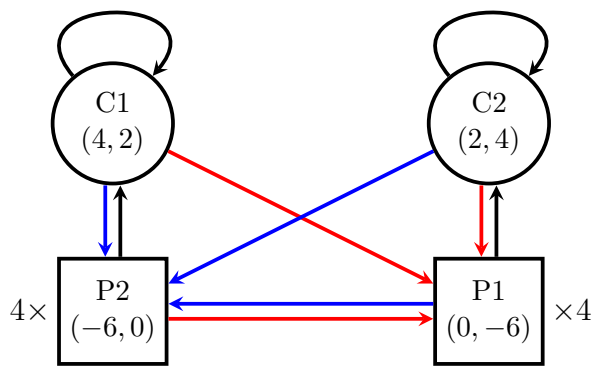

At the beginning of every period in collusion states C1 and C2, a biased coin toss determines which firm secures the hiring rights for the current period. The dominant firm wins the coin toss with a probability of $2/3$, while the nondominant firm has a winning probability of $1/3$. The firm that wins the coin toss hires two workers at zero wage, while the losing firm temporarily shuts down. Therefore, the dominant firm obtains \$$4$ on average, while the other obtains \$$2$. Notice that the collusion states feature a no-poaching agreement---in each period, the non-hiring firm refrains from soliciting workers from the other firm, thus forgoing an immediate gain in the current period. 
In punishment states P1 and P2, the firm being punished shuts down, while the punishing firm hires two workers each at a wage of \$$6$. \Cref{fig:example} depicts firms' payoffs in each state.

To see that this matching process is dynamically stable, we use a one-shot deviation principle (\cref{lemma: one-shot-deviation}), which establishes dynamic stability by checking two requirements at every history: (1) no worker wishes to unilaterally leave her matched firm to be unemployed and (2) no firm has a profitable one-shot deviation with a group of workers who also find this deviation profitable. 
Note that the first requirement is satisfied in every state of the matching process, since all workers are weakly better off than being unemployed. We next verify that the second requirement is also met in every state.

In punishment states, the punished firm cannot find a profitable deviation in the stage game. The punishing firm can gain a profit in the current period by deviating, but by doing so it would forgo the advantage of being the dominant firm in future collusive periods. As a result, for high values of $\delta$, the punishing firm does not find one-shot deviations profitable.

In collusion states, regardless of which firm has the hiring rights for the period, the dominant firm finds no profitable one-shot deviations when $\delta$ is high: Any deviation would lead to the loss of its position as the dominant firm, and thus a lower long-run payoff. The nondominant firm also has no incentive to deviate: Even though it could potentially obtain a current-period gain of \$$6$ by poaching workers from the winning firm, this would be followed by a net loss of \$$2$ for each of the next four periods (it gets \$$0$ from shutting down, compared with \$$2$ in the collusive state). When firms are patient, the loss outweighs the potential gain, so the nondominant firm also has no profitable one-shot deviations.

\section{Model} \label{section: general model}

\paragraph{Players.} At the beginning of every period $t = 0, 1, 2, \ldots$, a new generation of {workers} enters the market to match with a fixed set of long-lived firms. Let $\firms$ denote the set of firms, where $|\firms|\geq 2$. Matching is many-to-one: In every period, each firm $\firm \in \firms$ has $q_{\firm} > 0$ hiring slots to fill. Workers are short-lived and remain in the market for only one period. For expositional convenience, we use the same notation $\workers$ to denote the set of workers in every period; however, it is important to note that each worker $\worker \in \workers$ can have a different \term{type} in every period.\footnote{Our model can accommodate fluctuations in the number of workers across periods by introducing undesirable types, so that matching with an undesirable worker renders all of her partners strictly worse off (before transfer) than staying unmatched.} A finite set $\types_{\worker}$ contains all possible types $\type_{\worker}$ of worker $\worker$, and $\types=\bigtimes_{\worker \in \workers}\types_{\worker}$ is the set of type profiles of all workers. Workers' type profile $\type \in \types$ is randomly drawn from a distribution $\pi\in\Delta(\types)$ for each generation.

Each firm $\firm$ has a stage-game revenue function $\tilde{u}_{\firm}: 2^{\workers}\times \types \rightarrow \mathbb{R}$ defined over subsets of workers and type profiles.\footnote{While it may be reasonable to assume  that a firm's revenue depends solely on the types of workers it employs, none of our results require this assumption. Hence, for notational simplicity, we allow revenue to depend on the entire profile of worker types. The same applies to workers' payoff functions.}  
We normalize the revenue of staying unmatched, $\tilde{u}_{\firm}(\emptyset,\type)$, to $0$ for every $\type\in \types$.
Note that for each type profile, we do not require firms' revenue functions to satisfy the gross substitutes condition. Firms share a common discount factor $\delta$ and evaluate a sequence of flow utilities through exponential discounting. 

Each worker cares about both her employer and her colleagues, which we will refer to collectively as her \emph{work environment}. Let $\Enviro_\worker=( \firms \times   2^{\workers\backslash\{\worker\}}) \cup \{ (\emptyset, \emptyset )\}$ denote the set of possible work environments of worker $\worker$, where $(\emptyset, \emptyset)$ represents staying unmatched. Each worker $\worker$ has utility function $\tilde{v}_{\worker}: \Phi_w \times \types  \rightarrow \mathbb{R}$ over work environments.  For every worker $\worker \in \workers$ and type profile $\type \in \types$, $\tilde{v}_{\worker}(\emptyset, \emptyset, \type)$ is normalized to $0$. 

\paragraph{Stage Matching.} The outcome in every period is a static many-to-one matching among the firms and workers. Formally, a stage matching $m= (\phi, {\wage})$ is described by an \term{assignment} $\phi$ and a \term{wage vector} ${\wage}$. In particular, $\phi$ is a mapping defined on the set $\firms \cup \workers$ such that (i) for every $\worker \in \workers$, $\phi(\worker) \in \Enviro_w$; (ii) for every $\firm\in \firms$, $\phi(\firm) \subseteq {\workers}$ and $\abs{\phi(\firm)} \le q_{\firm} $; and (iii) for every $\worker\in\workers$ and every $\firm\in\firms$, $\worker \in \phi(\firm)$ if and only if $\phi(\worker) = (\firm, W')$ for some $W' \in 2^{\workers\backslash\{\worker\}}$. The {wage vector} ${\wage}=(\wage_{\firm \worker}) \in \Re^{\abs{\firms } \times {\abs{\workers}}}$ describes the transfer from firms to workers. 
We assume that any nonzero transfer occurs only between a firm and its own employees: $\wage_{\firm \worker} =0$ for every $\worker \notin \phi(\firm)$.

Players have quasilinear utilities: For every stage matching $m = (\phi, {\wage})$ and type profile $\type$,
\[
	u_{\firm}(m,\type) \equiv  \tilde{u}_{\firm}(\phi( \firm ),\type) - \sum_{\worker'\in \workers } \wage_{\firm \worker'} \;\text{ and }\;
	v_{\worker}(m,\type) \equiv \tilde{v}_{\worker}(\phi(\worker), \type) +   \sum_{\firm' \in \firms } \wage_{\firm' \worker} 
\]
are the stage-game payoffs received by firm $\firm$ and worker $\worker$, respectively.\footnote{Our results do not hinge on quasilinearity and hold as long as utilities are monotone and unbounded in wages.}

A stable stage matching is immune to three kinds of deviation: (i) a unilateral deviation by a firm $\firm$ who fires a subset of its employees and leaves those positions unfilled; (ii) a unilateral deviation by a worker $\worker$ who leaves her employer and remains unmatched; and (iii) a coalitional deviation $(\firm,W', {\wage}'_{\firm}) \in \firms \times 2^{\workers} \times \Re^{\abs{\workers}}$ with $|W'| \le q_{\firm}$ and $\wage'_{\firm \worker}=0$ for every $\worker\notin W'$, where firm $\firm$ and workers $W'$ match at wages specified in ${\wage}'_{\firm}$, abandoning any other pre-existing match partners. A deviation is profitable if all participants strictly prefer the deviation to the original matching $m$. A stage matching is \term{individually rational} if no unilateral deviations are profitable, and is \term{stable} if none of the three kinds of deviations above is profitable.

When studying the stability of static matchings, there is no need to specify how other players will be matched after a deviation by a coalition. However, in our dynamic setting, players' future behavior is influenced by past histories, so to study the stability of matching processes, we need to specify the realized stage-game outcome after a deviation. To this end, we adopt the following assumption.

\begin{assumption} \label{assumption: no further deviation}
After coalitional deviation $(\firm, W', {\wage}'_{\firm})$ from $m=(\phi,\wage)$, let $[m,(\firm, W', {\wage}'_{\firm})] \in M$ denote the resulting stage matching and $\phi'$ denote the assignment in $m' = [m,(\firm, W', {\wage}'_{\firm})]$. Assume $\phi'(\firm) = W'$ and $\phi' (\firm') = \phi (\firm')\backslash W' $ for every $\firm' \ne \firm$.
\end{assumption}

\cref{assumption: no further deviation} says that in the stage matching that results from a coalitional deviation, the deviators are matched together, players abandoned by the deviators remain unmatched, and those untouched by the deviation remain matched with their original partners. The sole purpose of this assumption is to ensure ``perfect monitoring'': When a matching $m$ is blocked by a coalition $(\firm, W', {\wage}'_{\firm})$, the firm in the deviating coalition is identifiable. \cref{lemma: identifiability of manipulator} formally proves this identifiability property from \cref{assumption: no further deviation}. Any alternative assumption that delivers the same identifiability property will not change our results.

\paragraph{Repeated Matching.} 
To convexify players' stage-game payoffs, we employ a public randomization device on the unit interval $\Omega=[0,1]$ endowed with the Lebesgue measure. While the use of public randomization streamlines our proofs, our results do not depend on it.\footnote{For example, we could follow arguments in \cite{sorin1986} and \cite{fudenbergmaskin1991} to convexify players' payoffs using sequences of play instead.}

The timing in each period is as follows. First, a new cohort of workers arrive. A type profile $\type \in \types$ is drawn from the distribution $\pi$, and a public randomization $\omega \in \Omega$ is realized. Based on the realized $(\type, \omega)$, a stage matching is recommended for active players in the market. Players then decide whether to deviate from this recommendation, which determines the outcome of the stage game.

A $t$-period ex ante history $\overline{\history} = (\type_{\tau}, \omega_{\tau}, m_{\tau})_{\tau = 0}^{t-1}$ specifies a sequence of past realizations of the type profile, the public correlating device, and the stage matching up to period $t-1$. We write $\fhistories_t$ for the set of all $t$-period ex ante histories, with $\fhistories_0 = \{\emptyset \}$ being the singleton set comprising the null history. Let $\fhistories \equiv \bigcup_{t=0}^\infty \fhistories_t$ be the set of all ex ante histories. Moreover, let $\histories_t \equiv \fhistories_t \times \types \times \Omega$ denote the set of $t$-period ex post histories, and $\histories \equiv \fhistories \times \types \times \Omega$ the set of all ex post histories. 

A \term{matching process} $\mu: \histories \rightarrow M$ specifies a stage matching for every ex post history. It represents a shared understanding among players regarding how past histories impact future employment. For every ex post history $h\in\histories$, we let $\mu(\firm|h)$ and $\mu(\worker|h)$ denote the matching partners for firm $\firm$ and worker $\worker$ in the stage matching $\mu(\history)$, respectively.

Let $\fhistories_\infty = (\types \times \Omega\times M)^\infty$ be the set of outcomes $\history_\infty$ of the repeated matching game. Let $m_{t}(\history_\infty)$ denote the stage matching in the $t$-th period of $\history_\infty$. Following every $t$-period (ex ante or ex post) history $\hat{\history} \in \fhistories \cup \histories$, let
\begin{equation*}
U_{\firm}(\hat{\history}\,|\,\mu) \equiv (1-\delta) \exp_{\mu}\Big[  \sum_{\tau = t}^\infty \delta^{\tau - t} u_{\firm} ( m_\tau(\history_\infty) , \type_\tau ) \, \Big| \, \hat{\history} \, \Big]
\end{equation*}
denote the continuation payoff firm $\firm$ obtains from $\mu$ following $\hat{\history}$, where the expectation is taken with respect to the measure over $\fhistories_\infty$ induced by $\mu$ conditional on $\hat{\history}$.

\paragraph{{Deviation Plan}.}  We make two observations that allow us to tractably analyze firms' deviations in repeated matching. Recall that there are three kinds of deviations within a period. We first observe that a unilateral deviation by firm $\firm$ to fire a subset of its employees is equivalent to a deviation by the coalition consisting of $\firm$ and the workers that remain with $\firm$. It is therefore without loss to focus on unilateral deviations by workers and deviations by coalitions consisting of a firm and a set of workers in the stage game.

Our second observation is that as a long-lived player, a firm can participate in a sequence of deviations by forming coalitions with workers across  periods. Each of these coalitions must be immediately profitable for the participating short-lived workers but not necessarily for the firm, since the firm cares about  the profit it collects from the entire sequence.

Motivated by this second observation, we define a \term{deviation plan} for firm $\firm$ as a complete contingent plan that specifies, at every ex post history, a set of workers to recruit and their wage offers. Formally, a deviation plan for firm $\firm$ is a pair $(d:\histories \rightarrow  2^{\workers},\eta: \histories \rightarrow \Re^{\abs{\workers}})$ such that $\left|d_{}(\history)\right| \le q_{\firm}$ for any $\history$ and $\eta_{ \worker}(\history) \ne 0$ only if $\worker\in d_{}(\history)$. Together with the original matching process, a deviation plan generates a distribution over the outcomes of the game $\fhistories_{\infty}$. Given a matching process $\mu$ and $\firm$'s deviation plan $(d_{}, \eta_{})$, the \term{manipulated matching process}, denoted by $[\mu, (\firm, d_{}, \eta_{})]: \histories \rightarrow \smatchings$, is a matching process defined by
\begin{equation*}
	\Big[\mu, (\firm, d_{}, \eta_{} )\Big] (\history) \equiv \Big [\mu(\history),\, \Big(\firm, d_{}(\history), \eta_{}(\history) \Big) \Big] \quad \forall h\in \histories.	
\end{equation*}

Firm $\firm$'s deviation plan $(d_{}, \eta_{})$ from $\mu$ is \term{feasible} if at every ex post history $h=(\overline{\history}, \type, \omega)$,
\begin{equation*}
	v_{\worker}\Big(\Big[\mu, (\firm, d_{}, \eta_{} )\Big] (\history),\, \type \Big) > v_{\worker} \Big( \mu(\history), \type \Big) \quad \forall \worker \in d_{} (\history).
\end{equation*} 
That is, workers participate in the deviation only if they find the new work environment strictly better than the recommendation from $\mu$.\footnote{An alternative formulation could allow workers who are already employed by $\firm$ to only weakly prefer their new work environment. Given the perfect divisibility of transfers, this would not affect our results.}
Lastly, the deviation plan $(d_{},\eta_{})$ is \term{profitable} if there exists an ex post history $\history$ such that $U_{\firm} \big(\history \,\big|\,[\mu, (\firm, d_{},\eta_{}) ]\big)> U_{\firm}(\history|\mu)$.

\paragraph{Self-Enforcing Matching Process.}
The two observations above motivate our notion of dynamic stability.
\begin{definition} \label{definition:self-enforcing-matching-process}
Matching process $\mu$ is \term{self-enforcing} if (i)
		$v_{\worker}(\mu(\history),\type) \geq 0 $ for every $\worker\in \workers$ at every ex post history $h\in \histories$; and
		(ii) no firm has a  feasible and profitable deviation plan.
\end{definition}

The first requirement guards against deviations by a single worker in any generation, and the second guards against deviations by firms. These requirements are imposed on the matching process at \emph{all} ex post histories, including those that are off path: This embeds a form of sequential rationality in the same way subgame perfection does in a repeated noncooperative game. It is also worth noting that \cref{definition:self-enforcing-matching-process} focuses on coalitions of a single firm and multiple workers: This is stronger than pairwise stability but weaker than group stability. In dynamic environments, these stability notions are not equivalent.\footnote{Allowing infinite-horizon deviations by multiple long-lived players creates a conceptual difficulty in assessing whether those deviations themselves can be self-enforcing. \cite{aliliu} show that if coalitions cannot commit to long-run behavior and are unable to make anonymous transfers, then modeling coalitions with multiple long-run players does not alter the set of sustainable outcomes when players are patient.} Nevertheless, in static settings, \cref{definition:self-enforcing-matching-process} reduces to the definition of core allocation in \cite{kelsocrawford1982}, so our definition can be viewed as its dynamic generalization.

\section{Results}\label{sec:results}

\paragraph{Minmax Payoffs.}
We begin by defining firms' minmax payoffs. For every realized type profile $\type \in \types$, let
\[
	M^{\circ}(\type) \equiv \{m\in M : v_{\worker}(m, \type) \geq 0 \text{ for all } \worker \in \workers\}
\]
denote the set of stage matchings that are individually rational for workers. From \cref{definition:self-enforcing-matching-process}, a self-enforcing matching process can only recommend stage matchings in $M^{\circ}(\type)$. Moreover, for every firm $\firm\in \firms$ and recommended stage matching $m = (\phi, {\wage})$, let
    \[
    D_{\firm}(m , \type) \equiv \left\{ (W', {\wage}'_{\firm}) : \begin{array}{c}
    |W'|\le q_{\firm}, \; \wage_{\firm \worker}'=0 \text{ if $\worker \notin W'$, }\\
    \tilde{v}_{\worker}(\firm , W' , \type)+\wage_{\firm \worker}'>v_{\worker}(m, \type) \text{ for every $\worker \in W'$}
    \end{array}\right\}
    \]  
denote the set of feasible stage-game deviations for $\firm$ at type profile $\type$. Firm $\firm$'s minmax payoff is its payoff from ``best responding'' to the worst recommendation.
\begin{definition}\label{def: minmax payoff}
    Firm $\firm$'s minmax payoff at type profile $\type$ is
    \[\underline{{u}}_{\firm}(\type) \equiv \inf_{m\in M^{\circ}(\type)}\, \sup_{ ( W', {\wage}'_{\firm})\in D_{\firm}(m, \type)} \,u_{\firm}([m, (\firm, W', {\wage}'_{\firm})], \type).\]
\end{definition}

To characterize this minmax payoff, let
\[
	\hchi({\firm},W, \type) \equiv \tilde{u}_{\firm}(W,\type) + \sum_{\worker \in W} \tilde{v}_{\worker}(\firm ,W\backslash\{\worker\},\type)
\]
denote the total surplus of coalition $(
\firm,W)$ at type profile $\type$.
The following lemma characterizes each firm's minmax payoff.
\begin{lemma}\label{lemma: equiv minmax}
    Let $Q\equiv\sum_{\firm'\in \firms} q_{\firm'}$ represent the sum of all firms' hiring quotas. For every firm $\firm$ and type profile $\type$, $\firm$'s minmax payoff satisfies
\begin{equation}\label{eq: minmax alt def}
    \underline{u}_{\firm}(\type) = \min_{W' \subseteq \workers, \abs{W'} \leq Q}\, \max_{ W \subseteq \workers\backslash W' , \abs{W}\leq q_{\firm}} \,\hchi({\firm},W, \type).
\end{equation}
\end{lemma}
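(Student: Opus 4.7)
My plan is to compute the inner supremum in \Cref{def: minmax payoff} for each fixed $m$, and then characterize the outer infimum by exploiting the unboundedness of transfers. Start by fixing $m\in M^{\circ}(\type)$ and a candidate recruitment set $W\subseteq \workers$ with $|W|\le q_{\firm}$. The constraint in $D_{\firm}(m,\type)$ forces each $\wage'_{\firm \worker}$ to lie strictly above $v_{\worker}(m,\type) - \tilde{v}_{\worker}(\firm, W\backslash\{\worker\}, \type)$, while $\firm$'s payoff $\tilde{u}_{\firm}(W,\type) - \sum_{\worker\in W}\wage'_{\firm\worker}$ is decreasing in each $\wage'_{\firm\worker}$. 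Driving every wage down to the infimum of its feasible range and using the definition of $\hchi$ gives $\hchi(\firm,W,\type) - \sum_{\worker \in W} v_{\worker}(m,\type)$ as the supremum over wages for this fixed $W$. Since $W=\emptyset$ is always feasible (yielding payoff $0$) and the family of $W$ is finite, the inner supremum in \Cref{def: minmax payoff} is attained and equals
\[
\max_{W\subseteq \workers,\, |W|\le q_{\firm}} \Bigl[\hchi(\firm,W, \type) - \sum_{\worker\in W} v_{\worker}(m,\type)\Bigr].
\]

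Next I establish the ``$\geq$'' direction of \eqref{eq: minmax alt def}. For any $m\in M^{\circ}(\type)$, let $W'(m)\subseteq \workers$ denote its matched set; because the firm quotas sum to $Q$ and each matched worker fills a slot, $|W'(m)|\le Q$. Workers outside $W'(m)$ receive no transfer, so by the normalization $\tilde{v}_{\worker}(\emptyset,\emptyset,\type)=0$ they satisfy $v_{\worker}(m,\type)=0$. Restricting the max above to $W\subseteq \workers\backslash W'(m)$ then produces the lower bound $\max_{W\subseteq \workers\backslash W'(m),\,|W|\le q_{\firm}} \hchi(\firm,W, \type)$, which is itself at least the RHS of \eqref{eq: minmax alt def}. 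Taking the infimum over $m$ preserves this inequality.

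For the ``$\leq$'' direction, I fix an arbitrary $W'\subseteq \workers$ with $|W'|\le Q$ and build a sequence $\{m_n\}$ whose matched set is exactly $W'$: pack the workers of $W'$ into firm slots (feasible since $|W'|\le Q=\sum_{\firm'\in\firms} q_{\firm'}$), set each wage received by a worker in $W'$ to $T_n$ with $T_n\to\infty$, and set all other transfers to zero. Since $\tilde{v}$ is bounded on the finite domain, $m_n\in M^{\circ}(\type)$ for $T_n$ large enough, and $v_{\worker}(m_n,\type)\to \infty$ for every $\worker\in W'$. Any $W$ with $W\cap W'\neq\emptyset$ then contributes $-\infty$ in the limit to the inner-max objective, so the max is eventually attained on $\{W\subseteq \workers\backslash W'\}$ and converges to $\max_{W\subseteq \workers\backslash W',\,|W|\le q_{\firm}} \hchi(\firm,W, \type)$. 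Hence $\underline{u}_{\firm}(\type)$, being an infimum, is bounded above by this value for every admissible $W'$, and minimizing over $W'$ yields \eqref{eq: minmax alt def}.

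The main thing to handle carefully is the strict-inequality constraint on wages, which makes the inner-wage optimization a genuine supremum rather than a maximum; the remaining steps are routine given the unboundedness of transfers and the finiteness of $\workers$ and $\firms$.
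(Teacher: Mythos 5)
Your proof is correct and follows essentially the same route as the paper's: the lower bound comes from the firm's option to recruit unmatched (zero-payoff) workers at near-reservation wages, and the upper bound from constructing a matching that locks an arbitrary $W'$ behind prohibitively high wages. The only cosmetic difference is that you make the inner supremum explicit as $\max_{W}\bigl[\hchi(\firm,W,\type)-\sum_{\worker\in W}v_{\worker}(m,\type)\bigr]$ (note it is a supremum in the wages, not attained, as you yourself acknowledge at the end), which the paper leaves implicit.
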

\cref{lemma: equiv minmax} states that a firm's minmax payoff equals the maximum surplus it can generate after $\sum_{\firm'} q_{\firm'}$ workers have been removed from the market in an adversarial manner. The intuition is as follows. To penalize $\firm$, other firms can offer high wages to $\sum_{\firm'\ne f} q_{\firm'}$ workers, which renders them unattractive as potential partners for $\firm$. Also, if the employees of $\firm$ are paid high wages  from $\firm$ and require consent to join a deviation, then $\firm$ would be better off abandoning them and looking for cheaper workers for its deviation. Altogether, this excludes $\sum_{\firm'} q_{\firm'}$ workers from $\firm$'s potential match pool in case of a deviation. Instead, $\firm$ will select its partners from the remaining workers and secure all the matching surplus by offering those workers zero payoff.

\paragraph{Characterization.} We first introduce some notation. For every $\type\in \types$ and $m\in M^{\circ}(\type)$, let $u(m,\type)\equiv (u_{\firm}(m,\type))_{\firm\in \firms}$ denote firms' payoff profile under $m$. For every $\type\in\types$, let $$\mathcal{U}(\type)\equiv \conv{\,\big\{u\in \Re^{\firms}: u = u(m,\type) \text{ for some }m\in M^{\circ}(\type) \big\}\,}$$ denote the convex hull of these payoff profiles. In addition, define $$\mathcal{U}^*\equiv \Big\{\sum_{\type\in \types} \pi(\type) u(\type): u(\type)\in \mathcal{U}(\type) \text{ for every }\type \in \types \Big\}.$$ Finally, for every firm $\firm$, let 
\[
\underline{u}_{\firm}^* \equiv \exp_{\pi}[\underline{u}_{\firm}(\type)]
\]
denote its expected minmax payoff over type profiles. 

The next result characterizes the discounted payoffs from self-enforcing matching processes.

\begin{proposition} \label{prop: folk theorem} 
(i) If $u\in \mathcal{U}^*$ satisfies $u_{\firm} > \underline{u}_{\firm}^*$ for all $\firm\in \firms$, there is a $\underline{\delta}\in (0,1)$ such that for every $\delta\in(\underline{\delta},1)$, there exists a self-enforcing matching process with firms' discounted payoffs $u$.
(ii) Suppose $\mu$ is a self-enforcing matching process for a given $\delta\in (0,1)$. For every ex ante history $\overline{\history}\in \fhistories$, firms' discounted payoff profile satisfies $\big( U_{\firm}(\overline{\history}\,|\,\mu) \big)_{\firm\in \firms}\in \mathcal{U}^*$ and  $U_f(\overline{\history}\,|\,\mu) \geq \underline{u}^*_{\firm}$ for every $\firm\in \firms$.
\end{proposition}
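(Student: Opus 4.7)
\emph{Part (ii) (necessity).} Fix a self-enforcing matching process $\mu$ and an ex ante history $\overline{\history}$. By requirement (i) of \cref{definition:self-enforcing-matching-process}, every realized stage matching lies in $M^{\circ}(\type)$. Hence the conditional distribution over stage matchings given $\type$ induces an expected firm-payoff vector in $\mathcal{U}(\type)$, and averaging over $\type \sim \pi$ places the one-period expected payoff in $\mathcal{U}^{*}$. Since $\mathcal{U}^{*}$ is convex and $U_{\firm}(\overline{\history}\,|\,\mu)$ is a $(1-\delta)\delta^{t}$-weighted average of such per-period vectors, the continuation vector itself lies in $\mathcal{U}^{*}$. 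For the minmax bound, suppose toward contradiction that $U_{\firm}(\overline{\history}\,|\,\mu) < \underline{u}_{\firm}^{*}$ for some $\firm$ and $\overline{\history}$. Fix $\varepsilon$ smaller than the gap and construct a deviation plan $(d,\eta)$ that, at every ex post history $\history$ reachable from $\overline{\history}$, selects a stage-game deviation whose payoff is within $\varepsilon$ of the inner supremum in \cref{def: minmax payoff}; feasibility holds at each such history because the recruited workers strictly prefer the offered wage by construction. The plan delivers at least $\underline{u}_{\firm}^{*}-\varepsilon$ per period after $\overline{\history}$, so it is profitable, contradicting self-enforcement of $\mu$.

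\emph{Part (i) (sufficiency).} Fix a target $u\in\mathcal{U}^{*}$ with $u_{\firm} > \underline{u}_{\firm}^{*}$ for every $\firm$. Decompose $u=\sum_{\type}\pi(\type)u(\type)$ with $u(\type)\in\mathcal{U}(\type)$, and by Carath\'eodory express each $u(\type)$ as a finite convex combination of payoff vectors generated by matchings in $M^{\circ}(\type)$; the public randomization device $\omega$ implements this lottery each period. Construct a $(|\firms|+1)$-state automaton with a collusion state $N$ and one punishment state $P_{\firm'}$ per firm $\firm'$, each lasting $T$ periods. In $N$ the recommendation is the lottery just described. In $P_{\firm'}$ the recommendation at type $\type$ assigns the adversarial block of $Q-q_{\firm'}$ workers described in \cref{lemma: equiv minmax} to firms other than $\firm'$ at wages high enough to rule them out as deviation partners; by \cref{lemma: equiv minmax} this caps $\firm'$'s attainable stage-game payoff at $\underline{u}_{\firm'}(\type)+\varepsilon$. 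Transitions: stay in $N$ absent deviations; upon a deviation by some firm $\firm'$---identifiable through \cref{lemma: identifiability of manipulator}---move to $P_{\firm'}$, restarting the $T$-clock on every further deviation by $\firm'$; after $T$ clean periods the automaton returns to a modified normal regime whose targets in $\mathcal{U}^{*}$ give each non-deviator $\firm\neq\firm'$ a strictly larger continuation payoff than the baseline $u_{\firm}$ (the ``carrot'').

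\emph{Main obstacle and verification.} The delicate point is the simultaneous calibration of $T$, $\varepsilon$, and the carrots so that every firm is willing to execute its prescribed role at every history. Since stage-game gains from any feasible one-shot deviation are uniformly bounded over the finite market primitives, for any fixed $T$ one can take $\delta$ close enough to $1$ that the $(1-\delta^{T})$-weighted loss from being minmaxed for $T$ periods dominates any stage-game gain; this handles on-path deviations in $N$ and further deviations by the already-punished firm in $P_{\firm'}$. Carrots are required because executing a punishment is not automatically a stage best response for the punishers---they may earn less in $P_{\firm'}$ than their baseline target, as in the paper's motivating example---so they must be compensated by a higher continuation upon completing the punishment. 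The feasibility of these carrot-augmented continuation payoffs inside $\mathcal{U}^{*}$ is guaranteed by a small preliminary perturbation of $u$ strictly into the interior of $\mathcal{U}^{*}$ in each firm's direction, which is permitted by the strict minmax inequalities. With these ingredients, the one-shot deviation principle \cref{lemma: one-shot-deviation} reduces self-enforcement to finitely many inequalities---one per state and per pair $(\firm,\firm')$---all of which hold once $\delta$ is sufficiently close to $1$.
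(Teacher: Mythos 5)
Your part (ii) tracks the paper's argument: convexity of $\mathcal{U}^*$ gives membership, and a deviation plan that secures (nearly) the stage minmax at every successor history gives the lower bound by contradiction. Part (i), however, has two genuine gaps, both located in the punishment phase.

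First, your punishment state does not actually minmax the deviator. You assign only $Q-q_{\firm'}$ workers to firms other than $\firm'$ at high wages, but \cref{lemma: equiv minmax} characterizes $\underline{u}_{\firm'}(\type)$ as the best surplus $\firm'$ can extract after $Q=\sum_{\firm''}q_{\firm''}$ workers have been removed from its deviation pool --- and the extra $q_{\firm'}$ removals come from assigning workers to $\firm'$ \emph{itself} at wages high enough that $\firm'$ would rather abandon them than meet their reservation payoffs. Leaving $\firm'$ unmatched caps its feasible deviation payoff only at $\min_{|W'|\le Q-q_{\firm'}}\max_{W\subseteq \workers\backslash W', |W|\le q_{\firm'}}\hchi(\firm',W,\type)$, which can strictly exceed $\underline{u}_{\firm'}(\type)$. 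Your automaton therefore sustains only payoffs above this weaker bound, not the full set $\{u\in\mathcal{U}^*: u_{\firm}>\underline{u}^*_{\firm}\}$. The paper's \cref{lemma: profit static minmax matchings} constructs $\underline{m}_{\firm'}(\type)$ by partitioning the entire adversarial block across \emph{all} firms, with $\firm'$ employing and overpaying part of it.

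Second, your patience argument ("for fixed $T$, take $\delta$ close to $1$ so the loss from $T$ periods of minmaxing dominates any stage gain") handles deviations from the normal phase but not deviations by the punished firm during its own punishment. There, complying can give $\firm'$ a stage payoff strictly \emph{below} $\underline{u}_{\firm'}(\type)$ (it is overpaying its own punishment employees), while deviating yields up to $\underline{u}_{\firm'}(\type)$ and merely restarts the $T$-clock. The ex post gain and the clock-restart loss are both of order $(1-\delta)$, so the comparison is not resolved by patience alone when $\pi$ is non-degenerate: at a type realization where the shortfall $\underline{u}_{\firm'}(\type)-u_{\firm'}(\underline{m}_{\firm'}(\type),\type)$ is large, the incentive constraint can fail for every $\delta$. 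The paper addresses exactly this with \cref{lemma: profit static minmax matchings}(iii), calibrating wages so that this shortfall is constant across $\type$. Relatedly, your appeal to perturbing $u$ "into the interior of $\mathcal{U}^*$" to build the carrots is unjustified, since $\mathcal{U}^*$ need not be full-dimensional; the paper instead verifies an NEU-type property (\cref{lemma: fw firm-specific punishments}) by having a firm raise its own wages, lowering its payoff while leaving the others' unchanged.
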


To prove (i), we first show that $\mathcal{U}^*$ satisfies the NEU condition \citep{abreuduttasmith}. For every payoff profile in $\mathcal{U}^*$ that is strictly above every firm's average minmax payoff, this allows us to construct a set of payoff profiles, each individually tailored for a specific firm, which can be used as carrots and sticks to discipline the firms. Finally, these payoff profiles are integrated into the classical idea of \cite{fudenberg1986folk} to construct a matching process that is self-enforcing when firms are patient.

The cooperative nature of the stage game gives rise to a technical difficulty. Since stage matchings lack the usual product structure of strategy profiles in normal-form games, when a firm $\firm$ is being minmaxed, it might be necessary for $\firm$ to obtain stage payoffs that are even lower than its minmax value. When $\pi\in \Delta(\types)$ is not a degenerate distribution, this may create incentives for $\firm$ to deviate at some ex post histories when it is being minmaxed, even though it is not profitable to do so on average. We tackle this by adjusting wages in the punishment scheme so that the ex post benefits of deviations are equalized across realizations of $\type$ (see \cref{lemma: profit static minmax matchings}(iii)).

To prove (ii), note that given realized type profile $\type$, by the definition of minmax payoffs, every firm $\firm$ can secure the minmax payoff $\underline{u}_f(\type)$ by deviating with workers. Taking expectation over the distribution of type profiles delivers the result.

\cref{prop: folk theorem}, however, does not guarantee the existence of self-enforcing matching processes. In fact, statement (i) would be vacuously true if there is no payoff profile $u\in \mathcal{U}^*$ that satisfies $u_{\firm}>\underline{u}^*_f$ for all $\firm\in\firms$. Proving the existence of a self-enforcing matching process amounts to showing that such a payoff profile can arise from a random serial dictatorship among the firms.

\paragraph{No-Poaching Agreement.} 
Let $\Gamma\equiv \big\{\gamma:\firms \rightarrow \{1,\ldots,|\firms|\} \big\}$ denote the set of orderings over firms. In the serial dictatorship corresponding to an ordering $\gamma\in \Gamma$, firms sequentially choose workers based on $\gamma$ while setting all of their matched workers' payoffs to zero, thereby capturing the entire matching surplus.

\begin{definition}[Serial Dictatorship] \label{definition: serial dictatorship}
Given $\type\in \types$ and $\gamma\in \Gamma$, stage matching $\hat{m}(\type,\gamma)$ is induced by the following procedure. Initialize ${W}^{\#}_0\equiv\emptyset$.
\begin{itemize}[leftmargin=0.4cm]
    \item \textbf{Step $i=1, \ldots, |\firms|$}: Denote $\hat{\firm}_i \equiv \gamma^{-1}(i)$, and let $\hat{W}_i\in \argmax_{W\subseteq \workers\backslash{W}^{\#}_{i-1}, |W|\le q_{\hat{\firm}_i}} \hchi(\hat{\firm}_i,W,\type)$; \\
    Set $\phi(\hat{\firm}_i)= \hat{W}_i$, $p_{\hat{\firm}_i \worker}= -\tilde{v}_{\worker}(\hat{\firm}_i, \hat{W}_i\backslash\{\worker\})$ for every $\worker\in \hat{W}_i$, and $p_{\hat{\firm}_i \worker}=0$ for every $\worker\notin \hat{W}_i$; \\
    Update ${W}^{\#}_i\equiv{W}^{\#}_{i-1}\cup\hat{W}_i$.
\end{itemize}
\end{definition}
In a \emph{random serial dictatorship (RSD)}, firms randomize over $\Gamma$ and play matching $\hat{m}(\type,\gamma)$ based on the realized order $\gamma$. Note that RSD can be seen as a form of no-poaching agreement since a firm refrains from soliciting workers already employed by another firm even if it is feasible to do so.

We make an assumption to simplify our existence proof. Let
\begin{equation}\label{equation: firm max payoff}
    \overline{u}_\firm(\type) \equiv \max_{W \subseteq \workers,|W|\leq q_\firm}\,\hchi(\firm,W, \type)
\end{equation}
denote $\firm$'s maximum feasible payoff at type profile $\type$. It is easy to see that $\overline{u}_\firm(\type) \geq \underline{u}_{\firm}(\type)$ for every $\type\in \types$ and $\firm\in \firms$, since removing $Q$ workers in an adversarial manner reduces $\firm$'s maximum surplus. We assume that for every firm, this inequality is strict with a positive probability.
\begin{assumption}\label{assumption: generic}
    For every firm $\firm$, there exists $\type \in \types$ with $\pi(\type)>0$ such that $\overline{u}_\firm(\type) > \underline{u}_{\firm}(\type)$.
\end{assumption}
\cref{assumption: generic} holds generically, for example, when players' payoffs given every type profile $\type$ is randomly drawn from a continuous distribution. We make this assumption to simplify the exposition of \cref{prop: existence}. In Online Appendix we show that our results hold even without this assumption.

The following lemma shows that when firms randomize uniformly over serial dictatorships, they can simultaneously obtain payoffs that exceed their expected minmax payoffs.
\begin{lemma} \label{lemma: nonemptiness}
Under \cref{assumption: generic}, for every firm $\firm$,
\[
\frac{1}{\abs{\Gamma}}\sum_{\gamma\in \Gamma} \exp_{\pi}\Big[{u}_{\firm}\big(\,\hat{m}(\type,\gamma),\type\,\big)\Big] > \underline{u}_{\firm}^*.
\]
\end{lemma}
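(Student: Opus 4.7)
The plan is to exploit the fact that under RSD every firm gets a positive-probability shot at being the first dictator, which delivers its first-best stage payoff $\overline u_\firm(\type)$, while in every other order the firm's realized payoff still weakly dominates its minmax value pointwise in $(\gamma,\type)$.

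First, I would compute firm $\firm$'s stage payoff under $\hat m(\type,\gamma)$. Writing $i=\gamma(\firm)$ and $W^{\#}_{i-1}$ for the workers hired before $\firm$'s turn, the wage rule in \cref{definition: serial dictatorship} sets each hired worker's payoff to exactly $0$, so the entire joint surplus with the chosen set $\widehat W_i$ accrues to $\firm$:
\[
u_{\firm}\bigl(\hat m(\type,\gamma),\type\bigr) = \tilde u_{\firm}(\widehat W_i,\type) + \sum_{\worker\in \widehat W_i}\tilde v_{\worker}\bigl(\firm,\widehat W_i\setminus\{\worker\},\type\bigr) = \hchi(\firm,\widehat W_i,\type) = \max_{\substack{W\subseteq\workers\setminus W^{\#}_{i-1}\\ |W|\le q_{\firm}}}\hchi(\firm,W,\type),
\]
where the last equality is the definitional property of $\widehat W_i$. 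This identifies firm $\firm$'s RSD payoff with a best response over whichever workers remain when its turn arrives.

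Second, I would invoke the characterization in \cref{lemma: equiv minmax}. Since $|W^{\#}_{i-1}|=\sum_{j<i}q_{\gamma^{-1}(j)}\le Q-q_{\firm}\le Q$, the set $W^{\#}_{i-1}$ is feasible for the outer minimization in \eqref{eq: minmax alt def}, hence the max in the previous display is at least $\underline u_{\firm}(\type)$. This gives the pointwise bound $u_{\firm}(\hat m(\type,\gamma),\type)\ge \underline u_{\firm}(\type)$ for every $(\gamma,\type)$, and averaging over $\gamma\in \Gamma$ and $\type\sim \pi$ delivers the weak inequality $\tfrac{1}{|\Gamma|}\sum_{\gamma}\exp_{\pi}[u_{\firm}(\hat m(\type,\gamma),\type)]\ge \underline u_{\firm}^*$. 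To sharpen this to a strict inequality, I would use \cref{assumption: generic} to pick $\type^*$ with $\pi(\type^*)>0$ and $\overline u_{\firm}(\type^*)>\underline u_{\firm}(\type^*)$. Whenever $\gamma(\firm)=1$ we have $W^{\#}_0=\emptyset$, so the first step yields $u_{\firm}(\hat m(\type^*,\gamma),\type^*)=\overline u_{\firm}(\type^*)$; these orderings carry uniform mass $1/|\firms|>0$, and $\pi(\type^*)>0$, so the strict gap survives both expectations while all other $(\gamma,\type)$ still contribute weakly above $\underline u_{\firm}(\type)$.

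The only mild obstacle I anticipate is the bookkeeping in the second step: confirming that the cumulative quota $|W^{\#}_{i-1}|\le Q-q_{\firm}$ at every intermediate dictator turn indeed fits inside the outer minmax constraint $|W'|\le Q$ in \eqref{eq: minmax alt def}, so that \cref{lemma: equiv minmax} supplies the pointwise lower bound uniformly in $i$. Everything else is routine: the stage payoff identity follows directly from the wage construction in \cref{definition: serial dictatorship}, and the strict gap is delivered by \cref{assumption: generic} through the positive-probability event $\{\gamma(\firm)=1,\;\type=\type^*\}$.
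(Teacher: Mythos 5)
Your proof is correct and follows essentially the same route as the paper's: you derive the pointwise identity $u_{\firm}(\hat m(\type,\gamma),\type)=\max_{W\subseteq\workers\setminus W^{\#}_{\gamma(\firm)-1},\,|W|\le q_{\firm}}\hchi(\firm,W,\type)$ from the zero-surplus wage rule, bound it below by $\underline u_{\firm}(\type)$ via the minmax characterization in \cref{lemma: equiv minmax} since $|W^{\#}_{\gamma(\firm)-1}|\le Q$, and obtain strictness from the positive-probability event where $\firm$ is ranked first and \cref{assumption: generic} bites. This matches the paper's argument step for step.
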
 

Let us explain the intuition for \cref{lemma: nonemptiness}. Suppose firms randomize uniformly over $\Gamma$ for every realized type profile $\type$, then each firm $\firm$ receives the payoff on the left side.
Fix $\firm$ and $\type$. In this RSD, the worst-case scenario for $\firm$ arises when $\firm$ is ranked last in $\gamma$ (i.e.,  $\gamma(f)=|\firms|$): By the time $\firm$ selects its employees, $Q_{-\firm}\equiv \sum_{\firm'\ne \firm}q_{\firm'}$ workers are already off the market. However, for $\firm$, this worst case under RSD is still weakly more desirable than being minmaxed, in which case $\firm$ must choose its employees after $Q=\sum_{\firm'}q_{\firm'}$ workers have been eliminated in an \textit{adversarial} manner. Therefore, the distribution of payoffs $\firm$ obtained under RSD first-order stochastically dominates the distribution of its minmax payoffs, so in expectation, every firm prefers RSD over being minmaxed.

We now state our existence result.

\begin{proposition} \label{prop: existence}
When firms are sufficiently patient, there exists a self-enforcing matching process in which firms play RSD in every period on path.
\end{proposition}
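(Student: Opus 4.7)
The plan is to use the payoff profile generated by uniform random serial dictatorship as the target for \cref{prop: folk theorem}(i), and to refine the folk-theorem construction so that on-path play is exactly RSD. Define
\[
u_\firm^* \equiv \frac{1}{\abs{\Gamma}}\sum_{\gamma\in \Gamma}\exp_{\pi}\!\big[u_\firm(\hat{m}(\type,\gamma),\type)\big] \quad\text{for every } \firm\in \firms.
\]
First I would verify $u^*\in \mathcal{U}^*$: by \cref{definition: serial dictatorship}, each worker matched in $\hat{m}(\type,\gamma)$ is paid wage $-\tilde{v}_\worker(\firm,\hat{W}\setminus\{\worker\},\type)$, so her stage payoff equals $0$ and $\hat{m}(\type,\gamma)\in M^\circ(\type)$ for every $(\type,\gamma)$. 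Averaging over $\gamma$ yields a point in $\mathcal{U}(\type)$, and taking the expectation over $\pi$ places $u^*$ in $\mathcal{U}^*$. \cref{lemma: nonemptiness} then gives $u^*_\firm > \underline{u}^*_\firm$ for every $\firm$, which is exactly the hypothesis of \cref{prop: folk theorem}(i).

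Next I would construct the matching process explicitly in the Fudenberg--Maskin style behind \cref{prop: folk theorem}, with the normal regime specialized to RSD. In the normal regime, the public randomization $\omega$ selects $\gamma\in \Gamma$ uniformly and the recommended stage matching is $\hat{m}(\type,\gamma)$; firms' expected per-period payoff profile is then $u^*$ and workers receive $0$, so condition (i) of \cref{definition:self-enforcing-matching-process} is automatically satisfied on path. When firm $\firm$ is identified as a deviator---identifiability being guaranteed by \cref{assumption: no further deviation} through \cref{lemma: identifiability of manipulator}---the process enters $\firm$'s punishment regime: $T$ periods of stage matchings that minmax $\firm$ (delivering $\underline{u}_\firm(\type)$ at each realized $\type$), followed by a reward continuation that leaves $\firm$'s long-run payoff close to $\underline{u}^*_\firm$ while giving every other firm a payoff strictly above $u^*$. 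The NEU-style argument already contained in \cref{prop: folk theorem} guarantees that such reward continuations exist in $\mathcal{U}^*$.

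I would then verify self-enforcement via the one-shot deviation principle (\cref{lemma: one-shot-deviation}). On path, any one-shot deviation by $\firm$ yields a bounded current-period gain but switches $\firm$'s continuation payoff from $u^*_\firm$ to a value arbitrarily close to $\underline{u}^*_\firm$; since $u^*_\firm > \underline{u}^*_\firm$, choosing $\delta$ close to $1$ and $T$ large renders the deviation unprofitable. Off path, $\firm$ has no profitable stage deviation during its own punishment by \cref{def: minmax payoff}, and the other firms strictly prefer to carry out the punishment because deviating would trigger their own punishment regime and forfeit the promised reward.

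The main obstacle I anticipate is the construction of the stage matchings used in the punishment regimes. These must simultaneously lie in $M^\circ(\type)$ so that workers do not unilaterally exit, deliver $\underline{u}_\firm(\type)$ to firm $\firm$ at each realized $\type$, and leave no firm (including those administering the punishment) with a profitable \emph{ex post} deviation at any realized $\type$---not merely on average. Addressing this requires using transfers to equalize the ex post gains from deviations across $\type$ realizations, which is the technical subtlety flagged in the sketch of \cref{prop: folk theorem} and is what allows the static minmax payoffs to be implemented history by history.
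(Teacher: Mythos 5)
Your proof is correct and follows essentially the same route as the paper: show that the uniform-RSD payoff profile lies in $\mathcal{U}^*$ (each $\hat{m}(\type,\gamma)\in M^{\circ}(\type)$ since matched workers receive exactly zero surplus), invoke \cref{lemma: nonemptiness} to place that profile strictly above every firm's expected minmax payoff, and apply \cref{prop: folk theorem}(i) with the normal-phase lottery specialized to the uniform draw over $\Gamma$. The extra detail you supply on the punishment phases and on equalizing ex post deviation gains across realizations of $\type$ is precisely what is already established in the proof of \cref{prop: folk theorem}(i) via \cref{lemma: profit static minmax matchings}, so no additional argument is required.
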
 

This result shows that as a particular form of no-poaching agreement, RSD can always be sustained dynamically. To understand the intuition, consider the uniform RSD. For every $\type$ and $\gamma$, matching $\hat{m}(\type,\gamma)$ is in $M^{\circ}(\type)$, so the randomization over $\Gamma$ places firms' payoff profile in $\mathcal{U}(\type)$. Taking expectation over $\type$, we know that firms' payoff profile from the uniform RSD is in $\mathcal{U}^*$. Furthermore, according to \cref{lemma: nonemptiness}, each firm's payoff is strictly higher than its average minmax payoff. \cref{prop: folk theorem}(i) then delivers the result.

\appendix
\renewcommand{\theequation}{\arabic{equation}}
  \renewcommand{\thesection}{\Alph{section}}

\section{Appendix} \label{section: appendix}

\subsection{Intermediate Results}\label{appendix: preliminary lemmas}

\begin{proofof}{\cref{lemma: equiv minmax}}
Fix $\firm\in\firms$, $\type\in\types$, and stage matching $m=(\phi, {\wage})\in M^{\circ}(\type)$. Define $W_m\equiv\{\worker \in \workers: \phi(\worker) \neq (\emptyset, \emptyset)\}$, so $\abs{W_m}\leq Q$. Hiring from $\workers\backslash W_m$ at infinitesimal wages is always a feasible deviation for $\firm$, which means
\begin{align*}
    \sup_{ (W', {\wage}'_{\firm})\in D_{\firm}(m, \type)} \,u_{\firm}([m, (\firm, W', {\wage}'_{\firm})], \type) & \geq \max_{ W' \subseteq \workers\backslash W_m , \abs{W'}\leq q_{\firm}} \,\hchi({\firm},W', \type)\\
    & \geq \min_{W \subseteq \workers, \abs{W} = \abs{W_m}}\, \max_{ W' \subseteq \workers\backslash W , \abs{W'}\leq q_{\firm}} \,\hchi({\firm},W', \type)\\
    & \geq \min_{W \subseteq \workers, \abs{W} \le Q}\, \max_{ W' \subseteq \workers\backslash W, \abs{W'}\leq q_{\firm}} \,\hchi({\firm},W', \type).
\end{align*}
Taking infimum on the LHS yields $$\underline{{u}}_{\firm}(\type) \geq \min_{W \subseteq \workers, \abs{W} \le Q}\, \max_{ W' \subseteq \workers\backslash W , \abs{W'}\leq q_{\firm}} \,\hchi({\firm},W', \type).$$

For the other direction, take any $W\subseteq \workers$ with $\abs{W} \le Q$. We can always construct a stage matching $\hat{m}=(\hat{\phi}, \hat{{\wage}})\in M^{\circ}(\type)$ such that (i) $\hat{\phi}$ assigns all workers in $W$ to firms, and (ii) all workers in $W$ receive sufficiently high wages so that $\firm$ never finds it profitable to deviate with any workers in $W$.
Therefore,
\begin{align*}
    \max_{ W' \subseteq \workers\backslash W , \abs{W'}\leq q_{\firm}} \,\hchi({\firm},W', \type) & = \sup_{ (W', {\wage}'_{\firm})\in D_{\firm}(\hat{m}, \type)} \,u_{\firm}([\hat{m}, (\firm, W', {\wage}'_{\firm})], \type)\\
    & \geq \inf_{m\in M^{\circ}(\type)}\, \sup_{ (W', {\wage}'_{\firm})\in D_{\firm}(m, \type)} \,u_{\firm}([m, (\firm, W', {\wage}'_{\firm})], \type)\end{align*}
Minimizing over $W$ on the LHS yields the other direction.
\end{proofof}

\begin{lemma}
\label{lemma: one-shot-deviation}
Deviation plan $(d_{}, \eta_{})$ is a \textup{one-shot deviation} from matching process $\mu$ if there is a unique ex post history $\hat{\history}$ where $[\mu, (\firm, d_{},\eta_{}) ](\hat{\history}) \ne \mu (\hat{\history})$. Matching process $\mu$ is self-enforcing if and only if (i) $v_{\worker}(\mu(\history), \type) \geq 0 $ for every $\worker\in \workers$ at every $\history\in \histories$; and (ii) no firm has a feasible and profitable \textup{one-shot deviation}.
\end{lemma}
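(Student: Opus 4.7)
The \emph{only if} direction is immediate: condition (i) reproduces condition (i) of \cref{definition:self-enforcing-matching-process}, and a one-shot deviation is a particular deviation plan, so any self-enforcing $\mu$ admits no feasible and profitable one-shot deviation. The substantive direction is the converse, which I plan to prove by contradiction.

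Suppose conditions (i) and (ii) hold for $\mu$ but some firm $\firm$ has a feasible and profitable deviation plan $(d,\eta)$, so that for some ex post history $\hat{\history}$,
\[
V(\hat{\history}) \;\equiv\; U_{\firm}\big(\hat{\history}\,\big|\,[\mu,(\firm,d,\eta)]\big) - U_{\firm}(\hat{\history}\,|\,\mu) \;>\; 0.
\]
For each ex post history $h$, let $(d_h,\eta_h)$ denote the one-shot deviation that coincides with $(d,\eta)$ at $h$ and with $\mu$ at every other ex post history. Feasibility of $(d_h,\eta_h)$ at $h$ is precisely the feasibility requirement that $(d,\eta)$ already satisfies at $h$, and feasibility at other histories is vacuous, so $(d_h,\eta_h)$ is feasible. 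By hypothesis (ii), $(d_h,\eta_h)$ is not profitable, and because it agrees with $\mu$ at every descendant of $h$, this non-profitability evaluated at $h$ itself reads
\[
(1-\delta)\, u_{\firm}\big(\big[\mu(h),(\firm,d(h),\eta(h))\big],\type\big) \,+\, \delta\, \exp\!\big[\,U_{\firm}(h'\,|\,\mu)\,\big|\,h,\text{dev}\,\big] \;\leq\; U_{\firm}(h\,|\,\mu),
\]
where $\exp[\,\cdot\,|\,h,\text{dev}\,]$ denotes the expectation over the next ex post history $h'$ given that the period-$t$ stage matching at $h$ is the deviated one. Subtracting this inequality from the analogous Bellman expansion of $U_{\firm}(h\,|\,[\mu,(\firm,d,\eta)])$ gives
\[
V(h) \;\leq\; \delta\, \exp\!\big[\,V(h')\,\big|\,h,\text{dev}\,\big].
\]

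Iterating this bound $n$ times yields $V(h) \leq \delta^n\,\bar V$ for every $h$, where $\bar V \equiv \sup_{h'} V(h')$. Since $\types$, $\firms$, and $\workers$ are finite and, in any feasible deviation, the wages $\eta_{\worker}(h)$ are controlled by workers' outside options under $\mu$ (plus bounded preference terms), both $\mu$ and the deviated process induce uniformly bounded stage payoffs, so $\bar V < \infty$. Sending $n \to \infty$ forces $V(h) \leq 0$ for every $h$, contradicting $V(\hat{\history})>0$. The main delicate point is precisely this uniform boundedness of $V$; once it is in hand, the remaining contraction-by-discounting step mirrors the classical one-shot deviation principle for noncooperative repeated games.
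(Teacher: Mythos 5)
Your proof is correct and takes essentially the same route as the paper's: where the paper cites Blackwell's one-shot deviation principle for the firm's individual decision problem as a black box, you unpack that citation into the explicit unimprovability-plus-discounting-contraction argument, so the mathematical content is identical. The uniform boundedness of $V$ that you rightly flag as the delicate point is exactly the step the paper also elides by asserting that the firm's stage payoffs are bounded---condition (i) together with feasibility does yield the uniform upper bound $\max_{\type\in\types}\overline{u}_{\firm}(\type)$ on firm $\firm$'s stage payoff under both $\mu$ and any feasible deviation, while the complementary lower bound (ruling out $\mu$ prescribing wages that explode along the history tree) is an implicit assumption shared by your argument and the paper's.
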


\begin{proof}
Standard arguments \citep{blackwell1965discounted} suffice. Online Appendix contains the proof.
\end{proof}

\begin{lemma}
\label{lemma: identifiability of manipulator} For any stage matching $m$, $$[m, (\firm_1, W_{\firm_1}, {\wage}_{\firm_1})]=[m, (\firm_2, W_{\firm_2}, {\wage}_{\firm_2})]\ne m  \text{ implies } \firm_1 = \firm_2.$$
\end{lemma}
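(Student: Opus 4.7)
The plan is to argue by contradiction. Assume $\firm_1 \ne \firm_2$ and write $m_i' \equiv [m,(\firm_i, W_{\firm_i}, \wage_{\firm_i})]$ with assignment component $\phi_i'$. I will show that $m_1' = m$, which contradicts the hypothesis $m_1'\ne m$.

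First, I would instantiate \cref{assumption: no further deviation} for both deviations. Equating $\phi_1'(\firm_1)=\phi_2'(\firm_1)$ yields $W_{\firm_1} = \phi(\firm_1)\setminus W_{\firm_2}$, and symmetrically $W_{\firm_2}=\phi(\firm_2)\setminus W_{\firm_1}$; in particular $W_{\firm_1}\cap W_{\firm_2}=\emptyset$ with $W_{\firm_i}\subseteq\phi(\firm_i)$ for $i=1,2$. I would then rule out the existence of any worker $\worker\in\phi(\firm_1)\cap W_{\firm_2}$ as follows: by disjointness $\worker\notin W_{\firm_1}$, and since $\worker$ had only $\firm_1$ as its employer in $m$, it lies in no $\phi_1'(\firm')$ for $\firm'\ne\firm_1$; so $\worker$ is unmatched under $\phi_1'$. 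Yet $\phi_2'(\firm_2)=W_{\firm_2}\ni \worker$ places $\worker$ with $\firm_2$ under $\phi_2'$, contradicting $\phi_1'=\phi_2'$ on the worker side. Hence $\phi(\firm_1)\cap W_{\firm_2}=\emptyset$, so $W_{\firm_1}=\phi(\firm_1)$; symmetrically $W_{\firm_2}=\phi(\firm_2)$, and plugging back gives $\phi_1'=\phi_2'=\phi$.

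Finally I would compare wage vectors. Under the $\firm_1$-deviation, $\firm_1$ pays each $\worker\in\phi(\firm_1)=W_{\firm_1}$ the deviation wage $(\wage_{\firm_1})_\worker$; under the $\firm_2$-deviation, $\firm_1$ is a bystander retaining its entire roster, so it pays the original $\wage_{\firm_1\worker}$. Equating $m_1'=m_2'$ forces $(\wage_{\firm_1})_\worker=\wage_{\firm_1\worker}$ on $\phi(\firm_1)$, and symmetrically $(\wage_{\firm_2})_\worker=\wage_{\firm_2\worker}$ on $\phi(\firm_2)$; all remaining wage entries already coincide with those of $m$. Thus $m_1'=m$, the required contradiction. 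The main obstacle is that \cref{assumption: no further deviation} specifies only the assignment component of a post-deviation matching, so the wage step depends on the implicit convention that non-deviating firms continue paying their retained employees at the original rates (consistent with the requirement that $\wage_{\firm\worker}=0$ whenever $\worker\notin\phi(\firm)$). Once that convention is fixed, the argument reduces to the short chain of set-equalities above.
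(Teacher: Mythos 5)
Your proof is correct and follows essentially the same route as the paper's: a proof by contradiction showing that if $\firm_1\ne\firm_2$, equality of the two manipulated matchings forces each deviating coalition to equal the deviator's original roster and each deviation wage vector to equal the original one, so that the manipulated matching would coincide with $m$. The paper organizes this as a three-case analysis on whether $W_{\firm_i}\subseteq\phi(\firm_i)$ rather than deriving the set identities directly from \cref{assumption: no further deviation}, and it likewise relies on the convention you flag (non-deviating firms keep paying retained employees at the original wages), so the two arguments are substantively identical.
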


\begin{proof}
Let $m = (\phi, {\wage})$, $[m, (\firm_1, W'_{\firm_1}, {\wage}'_{\firm_1})] = (\overline{\phi}, \overline{{\wage}})$, and $[m, (\firm_2, W'_{\firm_2}, {\wage}'_{\firm_2})] = (\hat{\phi}, \hat{{\wage}})$. Toward contradiction, suppose $\firm_1 \ne \firm_2$, but $[m, (\firm_1, W'_{\firm_1}, {\wage}'_{\firm_1})] = [m, (\firm_2, W'_{\firm_2}, {\wage}'_{\firm_2})] \ne m $. Then $\overline{\phi} =\hat{\phi}$ and $\overline{{\wage}}_{\firm}=\hat{{\wage}}_{\firm}$ for every $\firm\in \firms$. Each of the three cases to consider yields a contradiction.
\begin{enumerate}
	\item Suppose $W'_{\firm_1} = \phi (\firm_1)$ and $W'_{\firm_2} = \phi(\firm_2)$. Since $[m, (\firm_1, W'_{\firm_1}, {\wage}'_{\firm_1})] \ne m $,  we have ${\wage}'_{\firm_1} \ne {\wage}_{\firm_1}$. Then $\overline{{\wage}}_{\firm_1} = {\wage}'_{\firm_1} \ne {\wage}_{\firm_1} = \hat{{\wage}}_{\firm_1} $, a contradiction.
	\item Suppose $W'_{\firm_1} \subseteq \phi (\firm_1)$ and $W'_{\firm_2} \subseteq \phi(\firm_2))$ but, without loss of generality, $W'_{\firm_1} \neq \phi (\firm_1)$. We have $\overline{\phi}(\firm_1)  = W'_{\firm_1} \ne \phi(\firm_1) = \hat{\phi}(\firm_1) $, so $\overline{\phi} \ne \hat{\phi}$, a contradiction.
	\item Suppose, without loss of generality, $W'_{\firm_1} \nsubseteq \phi (\firm_1)$. Let $\worker' \in W'_{\firm_1} \backslash \phi (\firm_1)$. We have $\overline{\phi} (\worker') = \firm_1$, whereas $\hat{\phi} (\worker') \in \{\phi (\worker'), \firm_2 \}$. Since $\worker' \notin \phi (\firm_1)$ and $\firm_1 \ne \firm_2$, $\firm_1 \notin \{\phi (\worker'), \firm_2 \}$. Hence, $\overline{\phi} \ne \hat{\phi} $, a contradiction.
\end{enumerate}
Therefore, $[m, (\firm_1, W_{\firm_1}, {\wage}_{\firm_1})] = [m, (\firm_2, W_{\firm_2}, {\wage}_{\firm_2})] \ne m $ implies $\firm_1 = \firm_2$.
\end{proof}

\subsection{Proof of \texorpdfstring{\cref{prop: folk theorem}}{Proposition 1}}\label{appendix: prop folk}

\begin{lemma} \label{lemma: profit static minmax matchings}
For each $\type\in \types$, there exist stage matchings $\big\{\underline{m}_\firm(\type)\big\}_{\firm \in \firms} \subseteq M^{\circ}(\type)$ such that $\forall \firm\in \firms$, 
\begin{enumerate}
\item[(i)] $\sup_{(W', {\wage}'_\firm) \in D_\firm(\underline{m} _\firm, \type) } u_\firm([\underline{m}_\firm(\type), (\firm, W', {\wage}'_\firm)], \type ) = \underline{u}_\firm(\type)$;
\item[(ii)] $ u_\firm(\underline{m}_\firm(\type), \type) \le \underline{u}_\firm(\type)$; 
\item[(iii)] $\underline{u}_\firm(\type)- u_\firm(\underline{m}_\firm(\type), \type)= \underline{u}_\firm^*- \exp_{\pi}[u_\firm(\underline{m}_\firm(\type), \type)]$.
\end{enumerate}
\end{lemma}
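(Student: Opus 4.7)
The plan is to construct $\underline{m}_\firm(\type)$ explicitly by invoking the minmax characterization in \cref{lemma: equiv minmax}: for each $\type$ and each $\firm$, I will isolate a set of workers that are made too expensive for $\firm$ to poach, and then calibrate wages so that $\firm$'s own payoff in $\underline{m}_\firm(\type)$ equals $\underline{u}_\firm(\type) - K_\firm$ for a single constant $K_\firm$ independent of $\type$.

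Fix $\firm$. For each $\type$, let $W^*(\type) \subseteq \workers$ with $|W^*(\type)| = \min\{Q, |\workers|\}$ attain the minimum in \eqref{eq: minmax alt def}, and partition it as $\bigsqcup_{\firm' \in \firms} W^*_{\firm'}(\type)$ with $|W^*_{\firm'}(\type)| \leq q_{\firm'}$, taking $|W^*_\firm(\type)| = q_\firm$ whenever $|\workers|\ge Q$. In $\underline{m}_\firm(\type)$, I assign $W^*_{\firm'}(\type)$ to each $\firm'$ and leave workers in $\workers \setminus W^*(\type)$ unmatched (so $v_\worker(\underline{m}_\firm(\type),\type) = 0$ for them). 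Let $B(\type) \equiv \overline{u}_\firm(\type) - \underline{u}_\firm(\type) \geq 0$. I then calibrate wages so that (a) every $\worker \in W^*(\type)$ has $v_\worker(\underline{m}_\firm(\type),\type) \geq B(\type)$, and (b) $\sum_{\worker \in W^*_\firm(\type)} v_\worker(\underline{m}_\firm(\type),\type) = \hchi(\firm, W^*_\firm(\type), \type) - \underline{u}_\firm(\type) + K_\firm$ for a single constant $K_\firm$ chosen below. Since $u_\firm(\underline{m}_\firm(\type),\type) = \hchi(\firm, W^*_\firm(\type), \type) - \sum_{\worker\in W^*_\firm(\type)} v_\worker(\underline{m}_\firm(\type),\type)$ by quasilinearity, (b) yields $u_\firm(\underline{m}_\firm(\type),\type) = \underline{u}_\firm(\type) - K_\firm$ at every $\type$.

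To verify (i), I will note that the sup of $\firm$'s deviation payoff with coalition $W'$ equals $\hchi(\firm, W', \type) - \sum_{\worker \in W'} v_\worker(\underline{m}_\firm(\type),\type)$, attained in the limit as $p'_{\firm\worker}$ approaches the binding incentive value from above. When $W' \cap W^*(\type) \ne \emptyset$, (a) forces $\sum v_\worker \ge B(\type)$, so the sup is at most $\overline{u}_\firm(\type) - B(\type) = \underline{u}_\firm(\type)$. When $W' \subseteq \workers \setminus W^*(\type)$, the poached workers have $v_\worker = 0$, so the sup approaches $\hchi(\firm, W', \type)$; its maximum over such $W'$ equals $\underline{u}_\firm(\type)$ by the defining property of $W^*(\type)$. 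Hence the overall sup equals $\underline{u}_\firm(\type)$. Condition (ii) then follows from $K_\firm \geq 0$, and (iii) follows because the gap $\underline{u}_\firm(\type) - u_\firm(\underline{m}_\firm(\type),\type) = K_\firm$ is constant in $\type$ and therefore equals its own $\pi$-expectation.

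The hard part is showing that a single $K_\firm$ can make (a) and (b) simultaneously feasible at every $\type$. Since $v_\worker \ge B(\type)$ forces $\sum_{\worker\in W^*_\firm(\type)} v_\worker \ge |W^*_\firm(\type)|\cdot B(\type)$, I will take $K_\firm \ge \max_\type \bigl[|W^*_\firm(\type)| B(\type) + \underline{u}_\firm(\type) - \hchi(\firm, W^*_\firm(\type), \type)\bigr]$, which is finite because $|\types|<\infty$; the unbounded upper range of wages means no compatibility issues arise on the other side. The degenerate case $|\workers|<Q$, in which $\underline{u}_\firm(\type)=0$, is handled analogously by setting $W^*(\type)=\workers$ and leaving $\firm$ unmatched whenever possible, which makes the gap identically zero.
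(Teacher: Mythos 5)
Your construction is correct and takes essentially the same route as the paper's proof: you select a minmaxing worker set, park those workers at firms with utilities at least $\overline{u}_\firm(\type)-\underline{u}_\firm(\type)$ so any deviation touching them is capped at $\underline{u}_\firm(\type)$, and absorb a type-independent constant $K_\firm$ (the paper's $B_\firm$) into $\firm$'s own wage bill so that (iii) follows from the constancy of the gap. The only differences are bookkeeping (you work with worker utilities rather than explicit wage formulas) and your treatment of the $\abs{\workers}<Q$ edge case, where the paper instead guarantees a slack wage by always requiring $\abs{\underline{W}^\firm_\firm(\type)}\ge 1$; since whether $\firm$ can be left unmatched does not depend on $\type$, your version also goes through.
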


\begin{proof}
Fix $\type \in \types$. For every $\firm \in \firms$, let
\[
	\underline{W}_{\firm}(\type) \in \argmin_{W \subseteq \workers, \abs{W} \leq Q}\, \max_{ W' \subseteq \workers\backslash W , \abs{W'}\leq q_\firm} \,\hchi(\firm,W', \type)
\]
denote a set of workers to eliminate that minmaxes $\firm$,
and
\[
b_\firm(\type) = \overline{u}_\firm(\type) - \underline{u}_\firm(\type)\ge 0
\]
denotes the difference between $\firm$'s maximum feasible payoff and minmax payoff.

For every $\firm\in\firms$, let $\big\{\underline{W}_{\firm}^{\firm'}(\type) \big\}_{\firm' \in \firms}$ be a partition of $\underline{W}_{\firm}(\type)$ such that $\abs{\underline{W}_{\firm}^{\firm}(\type)} \geq 1$ and $\abs{\underline{W}_{\firm}^{\firm'}(\type)} \leq q_{\firm'}$ for every $\firm'\in \firms$. Define
\[
B_f\equiv\max_{\type\in\types}\max_{\worker\in\underline{W}_\firm^\firm(\type)}\,\abs{\underline{W}_{\firm}^{\firm}(\type)}\Big[b_\firm(\type)-\tilde{v}_\worker \big(\firm, \underline{W}_{\firm}^{\firm}(\type)\backslash\{\worker\} , \type \big)\Big]+\underline{u}_\firm(\type)-\tilde{u}_\firm(\underline{W}_{\firm}^{\firm}(\type), \type).
\]
For every $\type \in \types$, define stage matching $\underline{m}_\firm(\type) = (\underline{\phi}^\firm(\type), \underline{{\wage}}^\firm(\type) )$, where
\[
	\underline{\phi}^\firm(\type)(\firm') = \underline{W}_{\firm}^{\firm'}(\type),
\]
and
\[
	\underline{p}_{\firm'\worker}^\firm (\type) = 
	\begin{cases}
        \frac{\tilde{u}_\firm(\underline{W}_{\firm}^{\firm}(\type), \type) -\underline{u}_\firm(\type) +B_\firm}{\abs{\underline{W}_{\firm}^{\firm}(\type)}} & \text{if } \firm' = \firm \text{ and } \worker \in \underline{W}_{\firm}^{\firm}(\type)\\
	b_\firm(\type) - \tilde{v}_\worker \big(\firm', \underline{W}_{\firm}^{\firm'}(\type)\backslash\{\worker\} , \type \big) & \text{if } \firm' \neq \firm \text{ and } \worker \in \underline{W}_{\firm}^{\firm'}(\type) \neq \emptyset\\
	0 & \text{otherwise}.
	\end{cases}
\]	

For any $\type \in \types$, if $\worker \in \underline{W}_{\firm}^{\firm}(\type)$, 
\[
v_\worker(\underline{m}_\firm(\type), \type) = \tilde{v}_\worker \big(\firm, \underline{W}_{\firm}^{\firm}(\type)\backslash\{\worker\}, \type \big)+\frac{\tilde{u}_\firm(\underline{W}_{\firm}^{\firm}(\type), \type) -\underline{u}_\firm(\type) +B_\firm}{\abs{\underline{W}_{\firm}^{\firm}(\type)}} \geq b_\firm(\type) \geq  0
\]
by the definition of $B_\firm$. If $\worker \in \underline{W}_{\firm}^{\firm'}(\type)$ and $\firm'\neq \firm$, \[
v_\worker(\underline{m}_\firm(\type), \type)=b_\firm(\type) \geq 0.
\]
Hence, $\underline{m}_\firm(\type) \in M^{\circ}(\type)$.

\begin{enumerate}
\item[(i)] 
Consider any feasible deviation $( W', {\wage}'_\firm) \in D_\firm(\underline{m} _\firm, \type)$. Suppose $W' \subseteq \workers \backslash \underline{W}_{\firm}(\type)$. By feasibility, every $\worker \in W'$ finds the deviation individually rational, so $$\tilde{v}_\worker(\firm, W'\backslash\{\worker\}, \type) + \wage'_{\firm \worker} \ge 0.$$ This implies 
\begin{eqnarray}
\tilde{u}_\firm(W', \type) - \sum_{\worker \in W'} \wage'_{\firm \worker} & \le & \tilde{u}_\firm(W', \type) + \sum_{\worker \in W'} \tilde{v}_\worker(\firm, W'\backslash\{\worker\}, \type) \label{firstineq} \\
& = & \hchi(\firm, W' , \type) \nonumber \\
& \le & \max_{ W \subseteq \workers\backslash \underline{W}_{\firm}(\type) , \abs{W}\leq q_\firm} \,\hchi(\firm, W, \type) \label{lastineq} \\
& = & \underline{u}_\firm(\type), \nonumber
\end{eqnarray}
where \eqref{firstineq} follows from $W' \subseteq \workers \backslash \underline{W}_{\firm}(\type)$, and \eqref{lastineq} follows from the definition of $\underline{W}_{\firm}(\type)$. Suppose instead $W' \nsubseteq \workers \backslash \underline{W}_{\firm}(\type)$. Fix $\worker \in W' \cap \underline{W}_{\firm}(\type)$. By the construction of $\underline{m}_\firm(\type)$, $v_{\worker}(\underline{m}_\firm(\type), \type) \geq b_\firm(\type)$. For the deviation to be feasible, $\worker$ must obtain a payoff weakly higher than in $\underline{m}_\firm(\type)$
: $$\tilde{v}_\worker(\firm, W'\backslash\{\worker\}, \type) + \wage'_{\firm \worker} \ge b_\firm(\type);$$ meanwhile, every $\worker' \in W' \backslash \{\worker\}$ needs to find the deviation individually rational: $$\tilde{v}_{\worker'}(\firm, W'\backslash\{\worker'\}, \type) + \wage'_{\firm \worker'} \ge 0.$$ Then
\begin{align*}
\tilde{u}_\firm(W', \type) - \sum_{\worker' \in W'} \wage'_{\firm \worker'} & = \tilde{u}_\firm(W', \type) - \wage'_{\firm \worker} - \sum_{\worker' \in W', \worker' \ne \worker} \wage'_{\firm \worker'} \\
& \le \tilde{u}_\firm( W', \type) + \sum_{ \worker\in W' } \tilde{v}_\worker (\firm, W'\backslash\{\worker\}, \type ) -  b_\firm(\type) \\
& = \hchi(\firm, W', \type) - \overline{u}_\firm(\type) + \underline{u}_\firm(\type) \\
& \le \underline{u}_\firm(\type).
\end{align*}
We have shown that for any $W'$, $$ u_\firm([\underline{m} _\firm, (\firm, W', {\wage}'_\firm)], \type ) \le \underline{u}_\firm(\type). $$
Hence,
\[
\sup_{( W', {\wage}'_\firm) \in D_\firm(\underline{m} _\firm, \type) } u_\firm([\underline{m} _\firm, (\firm, W', {\wage}'_\firm)], \type ) \le \underline{u}_\firm(\type).
\]
By the definition of $\underline{u}_\firm(\type)$, the above holds with equality.

\item[(ii)] For every $\firm \in \firms$ and $\type \in \types$,
\begin{eqnarray}
u_\firm(\underline{m}_\firm(\type), \type) & = & \tilde{u}_\firm( \underline{W}^\firm_\firm(\type), \type) - \sum_{ \worker\in \underline{W}^\firm_\firm(\type) } \underline{p}_{\firm \worker}^\firm(\type)	 \nonumber\\
& =& \tilde{u}_\firm( \underline{W}^\firm_\firm(\type), \type) - \tilde{u}_\firm(\underline{W}_{\firm}^{\firm}(\type), \type) + \underline{u}_\firm(\type) -B_\firm \nonumber \\
& =& \underline{u}_\firm(\type) -B_\firm \label{uf-Bf}  \\
& \leq& \tilde{u}_\firm( \underline{W}^\firm_\firm(\type), \type) + \sum_{ \worker\in \underline{W}^\firm_\firm(\type) } \tilde{v}_i (\firm, \underline{W}^\firm_\firm(\type)\backslash\{\worker\}, \type ) - \abs{\underline{W}_{\firm}^{\firm}(\type)} b_\firm(\type) \label{ineq1} \\
& \le & \tilde{u}_\firm( \underline{W}^\firm_\firm(\type), \type) + \sum_{ \worker\in \underline{W}^\firm_\firm(\type) } \tilde{v}_i (\firm, \underline{W}^\firm_\firm(\type)\backslash\{\worker\}, \type ) - b_\firm(\type) \label{ineq2} \\
& =& \hchi(\firm, \underline{W}^\firm_\firm(\type), \type ) - \overline{u}_\firm(\type) + \underline{u}_\firm(\type) \nonumber\\
& \le & \underline{u}_\firm(\type) \nonumber,
\end{eqnarray}
where \eqref{ineq1} follows from the definition of $B_\firm$, and \eqref{ineq2} follows from $\abs{\underline{W}_{\firm}^{\firm}(\type)} \ge 1$.
 
\item[(iii)] By \eqref{uf-Bf}, for every $\firm\in \firms$ and $\type \in \types$,
\[
u_\firm(\underline{m}_\firm(\type), \type)= \underline{u}_\firm(\type) - B_\firm. 
\]
Therefore, 
\[
\exp_{\pi}[u_\firm(\underline{m}_\firm(\type), \type)] =\underline{u}^*_\firm - B_\firm,
\]
which means
\[
\underline{u}_\firm(\type) - u_\firm(\underline{m}_\firm(\type), \type)=\underline{u}_\firm^* - \exp_{\pi}[u_\firm(\underline{m}_\firm(\type), \type)].
\]
\end{enumerate}
\end{proof}

\begin{lemma} \label{lemma: fw firm-specific punishments}
For every $u \in \mathcal{U}^{*}$, there exist vectors $\{ u^\firm\}_{\firm \in \firms} \subseteq \mathcal{U}^{*}$ such that for all distinct $ \firm,\firm'\in \firms$,
\[
	u^\firm_\firm < u_\firm 
\quad \text{and} \quad
	u^\firm_\firm < u^{\firm'}_\firm.
\]
\end{lemma}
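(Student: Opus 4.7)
The plan is to construct each $u^\firm$ as a convex combination of $u$ with a vector $v^\firm \in \mathcal{U}^*$ that targets firm $\firm$ alone. The key observation is that $\mathcal{U}^*$ is unbounded below in each firm's payoff coordinate, since a firm can commit to arbitrarily high wages to one of its employees. This allows us to construct matchings in which a chosen firm's payoff is driven to any desired low value, while every other firm is kept at its normalized unmatched payoff of zero.

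Specifically, I would fix any worker $\worker^* \in \workers$ and, for each firm $\firm \in \firms$ and each type profile $\type \in \types$, define a stage matching $m^\firm(\type)$ in which $\firm$ hires $\worker^*$ alone at a wage $N$ while every other firm remains unmatched. Choosing $N$ large enough that both (a) $\tilde{v}_{\worker^*}(\firm, \emptyset, \type) + N \geq 0$ and (b) $\tilde{u}_\firm(\{\worker^*\}, \type) - N < \min\{u_\firm, 0\}$ hold for every $\firm \in \firms$ and $\type \in \types$ is feasible because both $\firms$ and $\types$ are finite. Condition (a) ensures $m^\firm(\type) \in M^{\circ}(\type)$, since the other workers are unmatched and hence individually rational by normalization; condition (b) drives $\firm$'s stage payoff $u_\firm(m^\firm(\type), \type) = \tilde{u}_\firm(\{\worker^*\}, \type) - N$ below $\min\{u_\firm, 0\}$ uniformly across types. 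Defining
\[
v^\firm := \sum_{\type \in \types} \pi(\type)\, u(m^\firm(\type), \type) \in \mathcal{U}^*,
\]
we get $v^\firm_\firm < \min\{u_\firm, 0\}$ and $v^\firm_{\firm'} = 0$ for every $\firm' \neq \firm$, since every firm other than $\firm$ is unmatched under $m^\firm(\type)$.

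Finally, set $u^\firm := \tfrac{1}{2}u + \tfrac{1}{2}v^\firm$. Since each $\mathcal{U}(\type)$ is a convex hull and $\mathcal{U}^*$ is the Minkowski sum of the scaled sets $\pi(\type)\mathcal{U}(\type)$, $\mathcal{U}^*$ is convex, so $u^\firm \in \mathcal{U}^*$. The requirement $u^\firm_\firm < u_\firm$ reduces to $v^\firm_\firm < u_\firm$, which holds by construction. The requirement $u^\firm_\firm < u^{\firm'}_\firm$ for $\firm' \neq \firm$ reduces to $v^\firm_\firm < v^{\firm'}_\firm$, which holds because $v^\firm_\firm < 0 = v^{\firm'}_\firm$. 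The only subtlety is verifying $m^\firm(\type) \in M^{\circ}(\type)$, which follows immediately from real-valued wages and the existence of at least one worker—both inherent features of the model.
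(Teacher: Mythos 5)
Your proof is correct for the lemma as stated, but it takes a genuinely different route from the paper's. The paper perturbs the lotteries that generate $u$ itself: it applies a map $\zeta^{\firm}$ that keeps every assignment fixed and raises firm $\firm$'s wage to each of its employees by one unit, then sets $u^{\firm}=\epsilon\,\exp_\pi[\exp_{\lambda(\type)}[u(\zeta^{\firm}(m),\type)]]+(1-\epsilon)u$; this lowers $\firm$'s coordinate while leaving all other firms' coordinates exactly at $u_{\firm'}$, and the strictness of the decrease comes from showing $\firm$ is matched with positive probability whenever $u_\firm>\underline{u}^*_\firm\ge 0$. You instead mix $u$ with an extreme point of $\mathcal{U}^*$ built from scratch---firm $\firm$ overpaying a single worker $\worker^*$ while all other firms sit idle at payoff $0$---and rely on unboundedness of wages plus convexity of $\mathcal{U}^*$ (a Minkowski sum of the convex sets $\pi(\type)\mathcal{U}(\type)$, so this is fine). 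Your construction is arguably more elementary and does not need the positive-probability-of-being-matched argument, and it proves the lemma for literally every $u\in\mathcal{U}^*$ rather than only those above the minmax profile. What the paper's version buys, and yours as written does not, is the additional property $u^{\firm}_{\firm}>\underline{u}^*_{\firm}$, which is not in the lemma statement but is used downstream in the proof of Proposition 1(i) (in the definition of $L$ and in phases (II) and (III)); with your fixed weight $\tfrac12$ and $v^{\firm}_{\firm}$ pushed arbitrarily far below zero, $u^{\firm}_{\firm}$ can land below $\underline{u}^*_{\firm}$. This is easily repaired---first fix $N$, then replace $\tfrac12$ by a sufficiently small mixing weight $\epsilon>0$---but you should note it if the lemma is to serve its intended role.
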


\begin{proof}
Take any distinct $\firm,\firm'\in \firms$. If $\firm$ pays higher wages to its employees, $\firm$ is worse off while $\firm'$ is indifferent. $\mathcal{U}^*$ satisfies the NEU condition \citep{abreuduttasmith}, and $\{ u^\firm\}_{\firm \in \firms}$ can be obtained via a similar construction. See Online Appendix.
\end{proof}

\begin{proofof}{\cref{prop: folk theorem}(i)}
Fix $u\in \mathcal{U}^*$. Let $(\lambda(\type))_{\type\in \types}$ be a tuple of lotteries such that $\lambda(\type)\in \Delta(M^{\circ}(\type)) \;\forall\type\in \types$, and $u = \exp_{\pi}[\exp_{\lambda(\type)}[u (m, \type) ]]$.
Let $\{\underline{m}_\firm(\type)\}_{\type\in \types,\firm\in \firms}$ be the minmax stage matchings constructed in \cref{lemma: profit static minmax matchings}. 
By \cref{lemma: profit static minmax matchings}(ii),
\begin{equation} \label{equation: profit minmax match lower than minmax}
	u_\firm( \underline{m}_\firm(\type), \type) \leq \underline{u}_\firm(\type) \quad \forall (\firm,\theta)\in \firms\times \Theta.
\end{equation}
By \cref{lemma: fw firm-specific punishments}, there exist $\{u^{\firm}\}_{\firm \in \firms} \subseteq \mathcal{U}^{*}$ such that for all distinct $\firm,\firm'\in \firms$,
\[
	u^\firm_\firm < u_\firm 
\quad \text{and} \quad
	u^\firm_\firm < u^{\firm'}_\firm.
\]  For every $\firm \in \firms$, let $(\lambda^\firm(\type))_{\type \in \types}$ be the tuple of lotteries that satisfies $\lambda^f(\type)\in \Delta(M^{\circ}(\type)) \;\forall\type\in \types$, and $u^f = \exp_{\pi}[\exp_{\lambda^f(\type)}[u (m, \type) ]]$. By Carath\'{e}odory's theorem, it is without loss to assume every $\lambda^\firm(\type)$ has bounded support. Define
\begin{equation}\label{eq: def of Z}
    Z_\firm(\type)\equiv\inf_{m\in\supp (\lambda^\firm(\type))}\, u_\firm(m, \type).
\end{equation}

At any $\type$, let $\overline{u}_\firm(\type)$ denote a firm $\firm$'s maximum payoff from any matching in $M^{\circ}(\type)$ as defined in \eqref{equation: firm max payoff}. Note that $\overline{u}_\firm(\type)$ is an upper bound for the payoff $\firm$ can obtain in any feasible deviation from matchings in $M^{\circ}(\type)$. Let $L$ be an integer greater than
\[
    \max_{\firm \in \firms,\, \type \in \types}\,\frac{\overline{u}_\firm(\type) - Z_\firm(\type)}{u^\firm_\firm-\underline{u}^*_\firm}.
\]
Consider a matching process with the following three phases:
\begin{itemize}
    \item[(I)] If past realizations from $\lambda(\cdot)$ were followed: Match according to $\lambda(\cdot)$;
    \item[(II)] If $\firm$ deviates: For the next $L$ periods, match according to $\underline{m}_\firm(\cdot)$;  \item[(III)] If $\underline{m}_\firm(\cdot)$ was followed for $L$ periods:  Match according to the realization from $\lambda^\firm(\cdot)$ until a firm deviates.
\end{itemize}
Note that if firm $\firm'$ deviates from (II) or (III), the process restarts (II) with $\firm$ replaced by $\firm'$. All deviations by individual workers are ignored.
\smallskip

By \cref{lemma: identifiability of manipulator}, for any stage matching that can result from a firm's deviation, we can uniquely identify the firm, so the transition above is well-defined.
All stage matchings in this matching process are individually rational for workers. It remains to check that no firm has profitable one-shot deviations.

\medskip
\noindent \underline{(I)}
$\firm$ has no profitable one-shot deviation if
\[
    ( 1 - \delta ) u_{\firm}(m, \type) + \delta u_{\firm} \ge (1-\delta) \overline{u}_{\firm}(\type) +\delta(1-\delta^{L}) \sum_{\type'\in \types} \pi[\type']u_{\firm}(\underline{m}_{\firm}(\type'), \type') + \delta^{L +1} u^{\firm}_{\firm}.
\]
Since $u_{\firm}  > u^{\firm}_{\firm}$ by construction, $\firm$ has no profitable one-shot deviation for $\delta$ high enough.

\medskip
\noindent \underline{(II)} Consider two cases.

\underline{Case a:} $\firm' \neq \firm$. Without deviation, $\firm'$ gets
\begin{equation}\label{equation: profit other firm minmax no dev}
    (1-\delta) u_{\firm'}(\underline{m}_\firm(\type) , \type) + \delta (1-\delta^{L - \tau-1})  \sum_{\type'\in \types} \pi[\type']u_{\firm'}(\underline{m}_\firm(\type') , \type') + \delta^{L - \tau} u^\firm_{\firm'}.
\end{equation}
By deviating, $\firm'$ gets less than
\begin{equation}\label{equation: profit other firm minmax dev}
    (1-\delta) \overline{u}_{\firm'}(\type) +\delta(1-\delta^{L}) \sum_{\type'\in \types} \pi[\type']u_{\firm'}(\underline{m}_{\firm'}(\type'), \type') + \delta^{L+1} u^{\firm'}_{\firm'}.
\end{equation}
As $\delta \rightarrow 1$, (\ref{equation: profit other firm minmax no dev}) converges to $u^\firm_{\firm'}$ and (\ref{equation: profit other firm minmax dev}) to $u^{\firm'}_{\firm'}$. Because $u^\firm_{\firm'}>u^{\firm'}_{\firm'}$ by construction, no one-shot deviation is profitable for high $\delta$.

\underline{Case b:} $\firm' = \firm$. 
Without deviation, $\firm$ gets
\begin{equation}\label{equation: profit same firm minmax no dev}
 (1-\delta) u_{\firm}(\underline{m}_\firm(\type) , \type) + \delta (1-\delta^{L - \tau-1})  \sum_{\type'\in \types} \pi[\type']u_{\firm}(\underline{m}_\firm(\type') , \type') + \delta^{L - \tau} u^\firm_{\firm}.
\end{equation}
With deviation, $\firm$ gets less than
\begin{equation}\label{equation: profit same firm minmax dev}
(1-\delta) \underline{u}_\firm(\type) +\delta(1-\delta^{L}) \sum_{\type'\in \types} \pi[\type']u_{\firm}(\underline{m}_{\firm}(\type'), \type') + \delta^{L+1} u^{\firm}_{\firm}.
\end{equation}
Since $ \sum_{\type'\in \types} \pi[\type']u_{\firm}(\underline{m}_\firm(\type') , \type')  \le \underline{u}^*_\firm < u^\firm_\firm$, expression \eqref{equation: profit same firm minmax no dev} is increasing in $\tau$. Hence it suffices to prove the case when $\tau=0$. When $\tau=0$, 
\begin{equation}\label{inequality above}
	\left[ u_{\firm}(\underline{m}_\firm(\type) , \type) - \underline{u}_\firm(\type)\right] -\delta^{L}\sum_{\type'\in \types} \pi[\type']u_{\firm}(\underline{m}_\firm(\type') , \type') + \delta^{L}u^\firm_\firm \geq 0.
\end{equation}
By \cref{lemma: profit static minmax matchings}(iii), 
\begin{equation}\label{equality above}
    u_{\firm}(\underline{m}_\firm(\type) , \type) - \underline{u}_\firm(\type) = \sum_{\type'\in \types} \pi[\type']u_{\firm}(\underline{m}_\firm(\type') , \type')  - \underline{u}^*_\firm.
\end{equation}
Substituting \eqref{equality above} into \eqref{inequality above} yields
\[
    (1-\delta^{L}) \sum_{\type'\in \types} \pi[\type']u_{\firm}(\underline{m}_\firm(\type') , \type') + \delta^{L}u^\firm_\firm \geq \underline{u}^*_\firm.
\]
By construction, $u^\firm_\firm>\underline{u}^*_\firm$, so the inequality holds for $\delta$ high enough.

\medskip
\noindent \underline{(III)} Consider two cases.

\underline{Case a:} $\firm' \neq \firm$. The argument for (I) holds analogously here, once we replace $\firm$ with $\firm'$ and $u_\firm$ with $u^\firm_{\firm'}$.

\underline{Case b:} $\firm' = \firm$.
There is no profitable one-shot deviation for $\firm'$ if
\[
	( 1 - \delta ) u_{\firm'}(m, \type) + \delta u_{\firm'}^{\firm'}  \ge (1-\delta) \overline{u}_{\firm'}(\type) +\delta(1-\delta^{L}) \sum_{\type'\in \types} \pi[\type']u_{\firm'}(\underline{m}_{\firm'}(\type'), \type') + \delta^{L+1} u^{\firm'}_{\firm'}.
\]
This is equivalent to
\[
    \delta \frac{1- \delta^{L}}{1- \delta} \left[u^{\firm'}_{\firm'} - \sum_{\type'\in \types} \pi[\type']u_{\firm'}(\underline{m}_{\firm'}(\type'), \type') \right] \ge \overline{u}_{\firm'}(\type) - u_{\firm'}(m, \type).
\]
Because $\frac{1- \delta^{L}}{1- \delta}\rightarrow L$ as $\delta \rightarrow 1$, the LHS converges to 
\begin{eqnarray}
    && L\left[u^{\firm'}_{\firm'} - \sum_{\type'\in \types} \pi[\type']u_{\firm'}(\underline{m}_{\firm'}(\type'), \type') \right] \nonumber \\
    &> & \frac{\overline{u}_{\firm'}(\type) - Z_{\firm'}(\type)}{u^{\firm'}_{\firm'}-\underline{u}^*_{\firm'}}\left[u^{\firm'}_{\firm'} - \sum_{\type'\in \types} \pi[\type']u_{\firm'}(\underline{m}_{\firm'}(\type'), \type') \right] \label{ineqfirst}\\
    & \ge & \overline{u}_{\firm'}(\type) - u_{\firm'}(m, \type), \label{ineqsec}
\end{eqnarray}
where \eqref{ineqfirst} follows from the definition of $L$, and \eqref{ineqsec} follows from \eqref{equation: profit minmax match lower than minmax}, \eqref{eq: def of Z}, and $m\in\supp (\lambda^\firm(\type))$. Thus, no deviation is profitable for $\delta$ high enough.
\end{proofof}

\begin{proofof}{\cref{prop: folk theorem}(ii)}
By definition, every firm $\firm$ can secure $\underline{u}_f(\type)$ by deviating with workers. Taking expectation over $\type$ delivers the result. See Online Appendix.
\end{proofof}

\subsection{Proof of \texorpdfstring{\cref{lemma: nonemptiness}}{Lemma 2}}\label{appendix: prop nonemptiness}

For every $\firm \in \firms$, let $\Gamma(\firm)  = \{\gamma \in \Gamma : \gamma(\firm) = 1 \}$ denote the set of orderings in which $\firm$ is ranked first.

Recall that for every $\gamma \in \Gamma$ and $\type \in \types$, matching $\hat{m}(\type,\gamma)$ is produced by serial dictatorship (\cref{definition: serial dictatorship}). It is straightforward that
\begin{equation}\label{equation: serial dictatorship 1}
u_{\firm}(\hat{m}(\theta,\gamma) ) =  \overline{u}_\firm(\type).
\end{equation}
Additionally, in serial dictatorship, for each $\hat{m}(\type,\gamma)$ and $\firm\in \firms$,
\begin{equation*}
    u_{\firm}(\hat{m}(\type,\gamma)) = u_{\firm}(\hat{W}_{\gamma(f)}) = \max_{W'\subseteq \workers \backslash \cup_{1\le i<\gamma(\firm)} \hat{W}_{i}, \abs{W'}\leq q_\firm} \, \hchi(\firm, W', \type).
\end{equation*}
Since every $\gamma \in \Gamma$ satisfies $\abs{\cup_{1\leq  i< \gamma(\firm)} \hat{W}_{i}} \le Q$,
\begin{equation} \label{equation: serial dictatorship 2}
	u_{\firm}(\hat{m}(\type,\gamma)) \ge \min_{W \subseteq \workers, \abs{W} \leq Q}\, \max_{ W' \subseteq \workers\backslash W , \abs{W'}\leq q_\firm } \,\hchi(\firm, W', \type) = \underline{u}_\firm(\type) \quad \forall \type\in \types, \firm\in \firms, \gamma\in \Gamma.
\end{equation}
For every $\firm\in \firms$,
\begin{align*}
    \frac{1}{|\Gamma|}\sum_{\gamma\in \Gamma} \exp_{\pi}\Big[{u}_{\firm}\big(\,\hat{m}(\type,\gamma),\type\,\big)\Big] & = \sum_{\type\in \types} \pi(\type)
    \sum_{\gamma\in \Gamma} \frac{1}{|\Gamma|}\Big[{u}_{\firm}\big(\,\hat{m}(\type,\gamma),\type\,\big)\Big] 
     > \sum_{\type\in \types} \pi(\type) \,\underline{u}_{\firm}(\type) = \underline{u}_{\firm'}^*,
\end{align*}
where the strict inequality follows from \eqref{equation: serial dictatorship 1}, \eqref{equation: serial dictatorship 2}, and \cref{assumption: generic}. 
 
\newpage
{\small
	\addcontentsline{toc}{section}{References}
	\setlength{\bibsep}{0.25\baselineskip}
	\bibliographystyle{aer}
	\bibliography{DynMatch}
}
\newpage

\newpage
\section{Online Appendix (For Online Publication)} \label{appendix: online}

\subsection{Omitted Proofs}
\begin{proof}[\bf Proof of \cref{lemma: one-shot-deviation}]
Suppose one-shot deviation plan $(d_{}, \eta_{})$ from matching process $\mu$ for firm $\firm$ is feasible and profitable. Since stage payoffs are bounded for firm $\firm$ and there is discounting, the standard one-shot deviation principle for individual decision-making \citep*{blackwell1965discounted} implies that there exists an ex post history $\hat{\history}=(\hat{\overline{\history}}, \hat{\type}, \hat{\omega})$ such that
\[(1-\delta)\left[\tilde{u}_{\firm}(d_j(\hat{\history}), \hat{\type})-\sum_{\worker\in d_{}(\hat{\history})}\eta_{ \worker}(\hat{\history})\right] + \delta U_{\firm}\big(\big[\mu(\hat{\history}),(\firm, d_{}(\hat{\history}), \eta_{}(\hat{\history}) ) \big]\, \big | \, \mu \big) > U_{\firm}(\hat{\history}\,|\, \mu ).\]

Consider deviation plan $(d^o_{\firm},\eta^o_{\firm})$ that satisfies 
\begin{equation*}
    [\mu,(\firm,d^o_{\firm},\eta^o_{\firm})](\history) = \begin{cases}
    [\mu,(\firm,d_{},\eta_{})](\history) &\text{if $h =\hat{\history}$,}\\
    \mu(\history) &\text{otherwise.}
    \end{cases}
\end{equation*}
Then $(d^o_{\firm},\eta^o_{\firm})$ is a profitable one-shot deviation plan for firm $\firm$.
\end{proof}

\begin{proof}[\bf Proof of \cref{lemma: fw firm-specific punishments}]

Fix stage matching $m = (\phi, {\wage} )$ and type profile $\type$. For every $\firm \in \firms$, define a matching $m^\firm = (\phi^\firm, {\wage}^\firm )$ by $\phi^\firm = \phi$, and
\[
	\wage^\firm_{\firm'\worker} = 
	\begin{cases}
		\wage_{\firm'\worker} + 1 & \text{ if } \firm' = \firm \text{ and } \worker \in \phi^\firm, \\
		\wage_{\firm'\worker}  & \text{ otherwise}.\\
	\end{cases}
\]
Clearly, $v_{\worker}(m^\firm, \type) = v_{\worker}(m, \type)$ for every $\worker \in \workers \backslash \phi^\firm$, and $v_{\worker}(m^\firm, \type) > v_{\worker}(m, \type)$ for every $\worker \in \phi^\firm$, so $m^\firm$ is individually rational for all workers whenever $m$ is---i.e., $m \in M^{\circ}(\type)$ implies $m^\firm \in M^{\circ}(\type)$. Moreover, by the construction of $\{m^\firm\}_{\firm \in \firms}$, we have $u_\firm(m^\firm, \type) \le u_\firm(m, \type)$ for every $\firm\in \firms$, with the inequality being strict if $\phi(\firm) \neq \emptyset$. In the following, we write $\zeta^\firm: m \mapsto m^\firm$.

For an arbitrary vector $u \in \mathcal{U}^{*}$, there exists a collection $(\lambda(\type))_{\type \in \types}$, where $\lambda(\type)\in \Delta(M^{\circ}(\type))$ for every $\type \in \types$, such that $u_\firm=\exp_{\pi}[\exp_{\lambda(\type)}[u_\firm (m, \type) ]] > \underline{u}_\firm^*$ for every $\firm$. For every firm $\firm\in \firms$, define
\[
	u^\firm = \epsilon\exp_\pi [\exp_{\lambda(\type)}[u(\zeta^\firm(m), \type)]]  + (1-\epsilon)u.
\]

First observe that because $u_\firm > \underline{u}^*_\firm \geq 0$, there exist $\type \in \types$ and $m=(\phi, {\wage}) \in M^{\circ}(\type)$ such that $\lambda(\type)[m]>0$ and $\phi(\firm) \neq \emptyset$; by the construction of $\{\zeta^\firm(m)\}_{\firm \in \firms}$, $u_\firm(\zeta^\firm(m), \type) < u_\firm(m, \type) = u_\firm(\zeta^{\firm'}(m), \type)$, so $u^\firm_\firm < u_\firm$ and $u^\firm_\firm < u^{\firm'}_\firm$ if $\epsilon> 0$. Second, every $u^\firm$ can be written as
\[u^\firm=\exp_{\pi}[\exp_{\lambda^\firm(\type)}[u_\firm (m, \type) ]],\]
where
\begin{equation}\label{equation: transformed lambda}
    \lambda^\firm(\type)=\epsilon\lambda(\type)\circ (\zeta^\firm)^{-1}+(1-\epsilon)\lambda(\type) \quad \firm\in\firms.    
\end{equation}
Note that the support of every $\lambda^\firm(\type)$ is also bounded. Finally, for small enough $\epsilon>0$, $u^\firm_\firm > \underline{u}^*_\firm$. Therefore, there must exist $\{u^\firm\}_{\firm \in \firms} \subseteq \mathcal{U}^{*}$ such that for all distinct $\firm,\firm'\in \firms$,
\[
	u^\firm_\firm < u_\firm \quad \text{and} \quad
	u^\firm_\firm < u^{\firm'}_\firm.
\]
\end{proof}

\begin{proof}[\bf Proof of \cref{prop: folk theorem}(ii)]\label{proof: folk theorem part ii}
It suffices to show that for any self-enforcing matching process $\mu$,
\[U_\firm(h\,|\,\mu) \ge (1-\delta)\underline{u}_\firm(\type) + \delta \underline{u}^*_\firm \quad \forall f\in \firms, \history\in \histories.\]

Suppose by contradiction that $U_\firm(h\,|\,\mu) < (1-\delta)\underline{u}_\firm(\type) + \delta \underline{u}^*_\firm$ for some $\firm\in \firms$ and $h \in \histories$. We will show that firm $\firm$ has a feasible profitable deviation plan from $\mu$, so $\mu$ cannot be self-enforcing.

Consider the following deviation plan $(d, \eta )$ from $\mu$: For all ex post histories that precede $\history$, the deviation plan is consistent with the matching process $\mu$. For any $\history' = (\overline{\history}', \type', \omega')\in \histories$ that succeeds $\history$ (including $\history$ itself),
\[
    d(\history') = \argmax_{A \subseteq \workers \backslash  \bigcup_{\firm'\in \firms}{\mu}(\firm' |\history') } \hchi(\firm, A , \type').
\] 
If $d(\history') \ne \emptyset$, then
\[
    \eta_{ \worker}(\history') = 
    \begin{cases}
    - \tilde{v}_\worker(\firm, d(\history')) + \frac{1}{2\abs{d(\history')} } \left[ (1-\delta)\underline{u}_\firm(\type) + \delta \underline{u}^*_\firm -U_\firm(h\,|\,\mu)\right] & \text{ if } \worker \in d(\history') ; \\
    0 & \text{ otherwise.}
    \end{cases}
\]
If $d(\history') = \emptyset$, define
\[
    \eta_{ \worker}(\history') = 0 \quad \forall \worker \in \workers.
\]
Note that by construction, $\hchi(\firm, d(\history'), \type') \geq \underline{u}_\firm(\type')$ because $\bigcup_{\firm'\in \firms}{\mu}(\firm' |h)  \leq Q$.

We first verify that deviation plan $(d, \eta )$ is feasible. At any history $\history'$ that succeeds $\history$, any $\worker\in d(\history')$ is unmatched under $\mu(\history')$ by construction, which means $v_\worker(\mu(\history'),\type')=0$. Meanwhile, according to the deviation plan,
\begin{align*}
	& \tilde{v}_\worker(\firm, d(\history')) + \eta_{ \worker}(\history') \\
        = \; & \tilde{v}_\worker(\firm, d(\history')) - \tilde{v}_\worker(\firm, d(\history')) + \frac{1}{2\abs{d(\history')} } \left[ (1-\delta)\underline{u}_\firm(\type) + \delta \underline{u}^*_\firm -U_\firm(h\,|\,\mu)\right] \\
	= \; & \frac{1}{2\abs{d(\history')} } \left[ (1-\delta)\underline{u}_\firm(\type) + \delta \underline{u}^*_\firm -U_\firm(h\,|\,\mu)\right] \\
        > \; & 0. 
\end{align*}
So at every possible history $\history'$, every worker in $d(\history')$ finds herself strictly better off by joining the deviation, which ensures the feasibility of $(d, \eta )$.
	
To see that $(d, \eta )$ is profitable, observe that at every history $\history'$ after $\history$, firm $\firm$'s stage-game payoff from the manipulated static matching $ [ \mu(\history'), ( \firm, d(\history'), \eta(\history') )] $ is
\begin{align*}
    & \tilde{u}_\firm(d(\history'), \type') - \sum_{\worker \in d(\history')} \eta_{ \worker}(\history') \\
    = \; & \tilde{u}_\firm(d(\history'), \type') + \sum_{\worker \in d(\history'))}  \tilde{v}_\worker(\firm, d(\history')) - \frac{1}{2} \left[ (1-\delta)\underline{u}_\firm(\type) + \delta \underline{u}^*_\firm -U_\firm(h\,|\,\mu)\right] \\
    \ge \; & \underline{u}_\firm(\type') - \frac{1}{2} \left[ (1-\delta)\underline{u}_\firm(\type) + \delta \underline{u}^*_\firm -U_\firm(h\,|\,\mu)\right].
\end{align*}
Since this is true for every history $\history'$ after $\history$, $\firm$'s total discounted payoff at ex post history $\history$ from the deviation plan is
\begin{align*}
    U_\firm(h\, |\, [ \mu, (\firm, d, \eta )] ) & = (1-\delta)\underline{u}_\firm(\type) + \delta \underline{u}^*_\firm- \frac{1}{2} \left[ (1-\delta)\underline{u}_\firm(\type) + \delta \underline{u}^*_\firm -U_\firm(\history\,|\,\mu)\right] \\
    & = \frac{1}{2} \left[(1-\delta)\underline{u}_\firm(\type) + \delta \underline{u}^*_\firm +U_\firm(\history\,|\,\mu)\right]\\
    & > \frac{1}{2} [(U_\firm(\history\,|\,\mu) +U_\firm(\history\,|\,\mu)]\\
    & = U_\firm(\history\,|\,\mu).
\end{align*}
Hence, deviation plan $(d, \eta )$ is both feasible and profitable for firm $\firm$, which contradicts the self-enforcement of $\mu$. Hence, 
$$U_\firm(h\,|\,\mu) \geq (1-\delta)\underline{u}_\firm(\type) + \delta \underline{u}^*_\firm\quad\forall \firm\in \firms, \history \in \histories.$$
\end{proof}

\subsection{Relaxing \texorpdfstring{\cref{assumption: generic}}{Assumption 2}}
In this section, we explain how our results continue to hold when \cref{assumption: generic} is not satisfied: There exists some firm $\firm$ such that for every $\type \in \types$ with $\pi(\type)>0$, we have $\overline{u}_\firm(\type) = \underline{u}_{\firm}(\type)$. In this case, other players' matching decisions have \emph{no impact} on the maximum static payoff firm $\firm$ can derive, since it can always turn to the unmatched workers and extract their surpluses. This means that future punishment cannot affect the matching behavior of $\firm$ via dynamic incentives, nor does $\firm$ find it beneficial to participate in any punishment scheme of other firms. Therefore, we can treat such a firm as ``inactive'' in our analysis, assign it the maximum payoff it can receive in every period, and ignore it for the rest of our analysis. An iteration is needed to identify all such firms that cannot be incentivized dynamically.

Formally, for a set of firms $\firms'\subseteq \firms$, denote by $Q(\firms')\equiv\sum_{\firm \in \firms'} q_\firm$ the total hiring capacity of $\firms'$. When all firms in $\firms \backslash \firms'$ are inactive, the effective minmax payoff of a firm $\firm \in \firms'$ at $\type$ is
\[
\underline{u}_{\firm}(\firms', \type) \equiv \min_{W' \subseteq \workers, \abs{W'} \leq Q(\firms')}\, \max_{ W \subseteq \workers\backslash W' , \abs{W}\leq q_{\firm}} \,\hchi({\firm},W, \type).
\]
Using a similar argument as in \cref{lemma: equiv minmax}, we can show that this is exactly firm $\firm$'s payoff from ``best responding'' to the worst punishment by firms in $\firms'$, while firms in $\firms \backslash \firms'$ leave no surplus to their employees. Note that for each $\type$, the value $\underline{u}_{\firm}(\firms', \type)$ weakly increases as $\firms'$ becomes smaller.

\begin{definition}
A hierarchical partition $\mathcal{P}\equiv \{\mathcal{P}_1, \ldots, \mathcal{P}_N, \mathcal{R}\}$ over firms $\firms$ is induced by the following procedure. Initialize $\mathcal{P}_0\equiv \emptyset$. For $n \geq 1$:
\begin{itemize}[leftmargin=0.5cm]
    \item If $\big\{\firm \in \firms \backslash \bigcup_{k=0}^{n-1}\mathcal{P}_k :  \overline{u}_\firm(\type) = \underline{u}_{\firm}(\firms \backslash \bigcup_{k=0}^{n-1}\mathcal{P}_k, \type) \; \forall \theta\big\} \neq \emptyset$, let this set be $\mathcal{P}_n$. Assign $n = n + 1$ and continue;
    \item If $\big\{\firm \in \firms \backslash \bigcup_{k=0}^{n-1}\mathcal{P}_k :  \overline{u}_\firm(\type) = \underline{u}_{\firm}(\firms \backslash \bigcup_{k=0}^{n-1}\mathcal{P}_k, \type) \; \forall \theta\big\} = \emptyset$, let $\mathcal{R} = \firms \backslash \bigcup_{k=0}^{n-1}\mathcal{P}_k$ and stop.
\end{itemize}
\end{definition}

Intuitively, each $\mathcal{P}_n$ consists of firms that cannot be punished in the matching process without cooperation from those in $\bigcup_{k=0}^{n-1}\mathcal{P}_k$. If $\mathcal{R} \neq \emptyset$, by construction, $\overline{u}_\firm(\type)>\underline{u}_\firm(\mathcal{R}, \type)$ for every $\firm \in \mathcal{R}$ and $\type \in \types$. Let 
\[
\overline{u}^*_\firm \equiv \exp_{\pi}[\overline{u}_\firm(\type)] \quad \forall \firm \in \firms \backslash \mathcal{R},
\]
and
\[
\underline{u}^*_\firm(\mathcal{R})\equiv \exp_{\pi}[\underline{u}_{\firm}(\mathcal{R},\type)] \quad \forall \firm \in \mathcal{R}.
\]

A generalized version of \cref{prop: folk theorem} can be stated as follows.

\begin{proposition} \label{prop: folk theorem general} 
(i) If $u\in \mathcal{U}^*$ satisfies $u_{\firm} = \overline{u}^*_\firm$ for all $\firm \in \firms \backslash \mathcal{R}$ and $u_{\firm} > \underline{u}^*_\firm(\mathcal{R})$ for all $\firm\in \mathcal{R}$, there is a $\underline{\delta}\in (0,1)$ such that for every $\delta\in(\underline{\delta},1)$, there exists a self-enforcing matching process with firms' discounted payoffs $u$.
(ii) Suppose $\mu$ is a self-enforcing matching process for a given $\delta\in (0,1)$. For every ex ante history $\overline{\history}\in \fhistories$, firms' discounted payoff profile satisfies $\big( U_{\firm}(\overline{\history}\,|\,\mu) \big)_{\firm\in \firms}\in \mathcal{U}^*$, $U_f(\overline{\history}\,|\,\mu) = \overline{u}^*_\firm$ for every $\firm\in \firms \backslash \mathcal{R}$, and  $U_f(\overline{\history}\,|\,\mu) \geq \underline{u}^*_\firm(\mathcal{R})$ for every $\firm\in \mathcal{R}$.
\end{proposition}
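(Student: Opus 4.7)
My plan is to prove Proposition 3 in two pieces that mirror the proof of Proposition 1: part (ii) by induction on the hierarchical partition $\mathcal{P}$, and part (i) by replicating the three-phase construction in the proof of Proposition 1(i) on the sub-market of firms in $\mathcal{R}$, while holding firms in $\firms \setminus \mathcal{R}$ at their per-period stage maxima.

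For part (ii), I would induct on $n$. In the base case $n=1$, every $\firm \in \mathcal{P}_1$ satisfies $\underline{u}_\firm(\firms,\type) = \underline{u}_\firm(\type) = \overline{u}_\firm(\type)$, so Proposition 1(ii) gives $U_\firm(\overline{\history}\,|\,\mu) \geq \underline{u}^*_\firm = \overline{u}^*_\firm$, matching the trivial upper bound $U_\firm \leq \overline{u}^*_\firm$. For the inductive step, if every firm in $\bigcup_{k<n}\mathcal{P}_k$ obtains $\overline{u}^*_\firm$ in any self-enforcing $\mu$, then at every realized $\type$ such a firm must match with a coalition in $\argmax_W \hchi(\firm,W,\type)$ and extract full surplus, leaving its employees at zero utility---hence poachable at infinitesimal cost. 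I would then adapt the deviation plan from the online appendix proof of Proposition 1(ii) to recruit from this enlarged poachable pool at every ex post history, securing for $\firm \in \mathcal{P}_n$ a stage payoff of at least $\underline{u}_\firm(\firms \setminus \bigcup_{k<n}\mathcal{P}_k,\type) = \overline{u}_\firm(\type)$. Taking expectation over $\type$ forces $U_\firm = \overline{u}^*_\firm$ and closes the induction; the same deviation argument applied at $\firm \in \mathcal{R}$ delivers $U_\firm \geq \underline{u}^*_\firm(\mathcal{R})$. Finally, $(U_\firm)_{\firm \in \firms} \in \mathcal{U}^*$ because $\mu$ recommends stage matchings in $M^\circ(\type)$ at every history.

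For part (i), I would replicate the construction in the proof of Proposition 1(i) with inputs restricted so that firms in $\firms \setminus \mathcal{R}$ always receive their per-period maxima. Pick lotteries $(\lambda(\type))_{\type}$ realizing the target $u$; since $u_\firm = \overline{u}^*_\firm$ for each $\firm \in \firms \setminus \mathcal{R}$, every matching in each $\lambda(\type)$'s support must assign each such firm an optimal coalition at zero rent. Mimicking Lemma 2 within $\mathcal{R}$, I would inflate each $\firm \in \mathcal{R}$'s own-wages in $\lambda(\type)$ to produce punishment profiles $\{u^\firm\}_{\firm \in \mathcal{R}} \subseteq \mathcal{U}^*$ satisfying $u^\firm_\firm < u_\firm$, $u^\firm_\firm < u^{\firm'}_\firm$ for distinct $\firm,\firm' \in \mathcal{R}$, and $u^\firm_{\firm''} = \overline{u}^*_{\firm''}$ for $\firm'' \in \firms \setminus \mathcal{R}$ (wage inflation within $\mathcal{R}$ does not move firms in $\firms \setminus \mathcal{R}$). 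Mimicking Lemma 3, I would build minmax matchings $\underline{m}_\firm(\type)$ for $\firm \in \mathcal{R}$ in which firms in $\firms \setminus \mathcal{R}$ take their optimal coalitions while firms in $\mathcal{R}$ absorb a $Q(\mathcal{R})$-sized adversarial worker pool at high wages, pinning $\firm$'s best-response stage payoff to $\underline{u}_\firm(\mathcal{R},\type)$. These ingredients feed into the three-phase normal-play / minmax / continuation scheme of the proof of Proposition 1(i); the one-shot-deviation inequalities carry over verbatim for $\delta$ close to $1$, using $u_\firm > \underline{u}^*_\firm(\mathcal{R})$ and the strict gap $\overline{u}_\firm(\type) > \underline{u}_\firm(\mathcal{R},\type)$ for $\firm \in \mathcal{R}$ guaranteed by the definition of $\mathcal{R}$. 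Firms in $\firms \setminus \mathcal{R}$ collect their maxima on every path and cannot profitably deviate.

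\textbf{The main obstacle} I anticipate is the compatibility check in building $\underline{m}_\firm(\type)$ for $\firm \in \mathcal{R}$: the adversarial set of size $Q(\mathcal{R})$ that realizes the effective minmax must be absorbed by $\mathcal{R}$-firms at high wages, while firms in $\firms \setminus \mathcal{R}$ simultaneously take their optimal coalitions at zero rent, and these two selections may overlap. The resolution should rest on the observation that workers absorbed at zero rent by firms in $\firms \setminus \mathcal{R}$ remain poachable by $\firm$ at infinitesimal cost and hence do not reinforce the punishment: the effective deviation pool facing $\firm$ is controlled by $Q(\mathcal{R})$ alone---exactly as encoded in $\underline{u}_\firm(\mathcal{R},\type)$---so the adversarial set can always be reassigned (or absorbed directly into $\firm$'s deviation bookkeeping) without weakening the minmax bound.
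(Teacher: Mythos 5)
Your proposal is correct and follows essentially the same route as the paper: part (ii) by induction on the hierarchical partition (each tier extracts full surplus, leaving a zero-rent poachable pool for the next tier, with the deviation-plan argument of Proposition 1(ii) supplying the lower bounds), and part (i) by running the three-phase Fudenberg--Maskin construction of Proposition 1(i) on the sub-market $\mathcal{R}$ while firms in $\firms\setminus\mathcal{R}$ are held at their per-period maxima, which by definition of the partition they can always secure and hence never wish to deviate from. The only cosmetic difference is that the paper treats $\mathcal{R}=\emptyset$ and $\abs{\mathcal{R}}=1$ as separate explicit cases, whereas your unified construction degenerates correctly to both; your resolution of the compatibility issue in building $\underline{m}_\firm(\type)$ is exactly the observation the paper uses to justify the effective minmax $\underline{u}_\firm(\mathcal{R},\type)$.
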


\begin{proof}
For part (i), let $(\lambda(\type))_{\type\in \types}$ be the tuple of lotteries such that $\lambda(\type)\in \Delta(M^{\circ}(\type)) \;\forall\type\in \types$, and $u = \exp_{\pi}[\exp_{\lambda(\type)}[u (m, \type) ]]$. There are three cases to consider.

\underline{Case 1:} $\mathcal{R}=\emptyset$. If a firm $\firm$ receives $u_\firm = \overline{u}^*_\firm$ on average, it is necessary that this firm receives the highest possible payoff $\overline{u}_\firm(\type)$ at every realization of $\type$. This in turn implies that, for each $\type$, $\lambda(\type)$ only assigns positive probability to stage matchings that are \emph{stable} in a static sense. Therefore, a matching process that recommends according to $(\lambda(\type))_{\type\in \types}$ in every period is self-enforcing.

\underline{Case 2:} $\abs{\mathcal{R}}=1$. Let $\firm$ denote the single firm in $\mathcal{R}$, and let $(\underline{m}_\firm(\type))_{\type \in \types}$ be the matchings that give $\firm$ the most severe punishment by $\firm$ itself, while all other firms receive the highest possible payoff $\overline{u}_\firm(\type)$. Consider the following matching process:
\begin{itemize}
\item[(I)] Match according to $\lambda(\cdot)$ if $\lambda(\cdot)$ was followed in the last period or $\underline{m}_\firm(\cdot)$ was followed for $L$ periods;
\item[(II)] If firm $\firm$ deviates from (I), match according to $\underline{m}_\firm(\cdot)$
for $L$ periods.
\end{itemize}
If firm $\firm$ deviates from (II), restart (II).

It is easy to check that when $L$ is sufficiently large, firm $\firm$ has no incentive to deviate in either phase, since $\delta\rightarrow 1$. All other firms have no incentive to deviate, since they already receive maximum stage payoff in every period.

\underline{Case 3:} $\abs{\mathcal{R}}\geq 2$. The proof for this case essentially follows the one for \cref{prop: folk theorem} with proper adjustments.

For part (ii), by construction, any firm $\firm \in \mathcal{P}_1$ can secure a stage payoff $\overline{u}_\firm(\type)$ by deviation at every $\type$ regardless of the stage matching. Taking expectation yields $U_f(\overline{\history}\,|\,\mu) = \overline{u}^*_\firm$ for every $\firm \in \mathcal{P}_1$.

Suppose $U_{f'}(\overline{\history}\,|\,\mu) = \overline{u}^*_{\firm'}$ for every $\firm' \in \bigcup_{k=1}^{n}\mathcal{P}_k$ with $n< N$. Then all firms in $\bigcup_{k=1}^{n}\mathcal{P}_k$ offer zero wages to their employees. By construction, each firm $\firm \in \mathcal{P}_{n+1}$ can secure a stage payoff $\overline{u}_\firm(\type)$ by deviation at every $\type$ regardless of how firms in $\firms \backslash \bigcup_{k=1}^{n}\mathcal{P}_k$ are matched in each period. Taking expectation yields $U_f(\overline{\history}\,|\,\mu) = \overline{u}^*_\firm$ for every $\firm \in \mathcal{P}_{n+1}$. By induction, the equality holds for all $\firm\in \firms \backslash \mathcal{R}$.

By definition of the effective minmax payoff, every firm $\firm\in \mathcal{R}$ can secure $\underline{u}_f(\mathcal{R}, \type)$ in each period by deviating with workers. Taking expectation over $\type$ yields $U_f(\overline{\history}\,|\,\mu) \geq \underline{u}^*_\firm(\mathcal{R})$ for these firms. Rigorous proof can be adapted from that of \cref{prop: folk theorem}(ii) in \cref{proof: folk theorem part ii}.
\end{proof}

We next define a random serial dictatorship with respect to the hierarchical partition $\mathcal{P}=\{\mathcal{P}_1, \ldots, \mathcal{P}_N, \mathcal{R}\}$. To do so, we first introduce a subset of orderings $\Gamma_{\mathcal{P}}\subseteq \Gamma$ that contains those that (i) give firms in $\mathcal{R}$ the highest priorities and (ii) rank firms in $ \mathcal{P}_n$ higher than those in $\mathcal{P}_k$ if $n> k$. That is,
\[
    \Gamma_{\mathcal{P}} \equiv \left\{ \gamma \in \Gamma : \begin{array}{c}
    \gamma(\mathcal{R})=\{1,2,\ldots, \abs{\mathcal{R}}\}, \text{ and}\\
    \text{if $\firm \in \mathcal{P}_k$, $\firm' \in \mathcal{P}_n$, and $k<n$, then $\gamma(\firm)>\gamma(\firm')$}
    \end{array}\right\}.
\]

The stage matching $\hat{m}(\type, \gamma)$ induced by a serial dictatorship according to $\gamma$ is defined as in \cref{definition: serial dictatorship}. By using a random serial dictatorship restricted to $\Gamma_{\mathcal{P}}$, the following lemma generalizes \cref{lemma: nonemptiness} and shows that \cref{prop: folk theorem general}(i) is not vacuously true.

\begin{lemma} \label{lemma: nonemptiness general}
For every firm $\firm \in  \mathcal{R}$,
\[
\frac{1}{\abs{\Gamma_{\mathcal{P}}}}\sum_{\gamma\in \Gamma_{\mathcal{P}}} \exp_{\pi}\Big[{u}_{\firm}\big(\,\hat{m}(\type,\gamma),\type\,\big)\Big] > \underline{u}_{\firm}^*(\mathcal{R}).
\]
For every firm $\firm \in \firms \backslash \mathcal{R}$,
\[
\frac{1}{\abs{\Gamma_{\mathcal{P}}}}\sum_{\gamma\in \Gamma_{\mathcal{P}}} \exp_{\pi}\Big[{u}_{\firm}\big(\,\hat{m}(\type,\gamma),\type\,\big)\Big] = \overline{u}^*_\firm.
\]

\begin{proof}
    The proof of the first statement follows that of \cref{lemma: nonemptiness}.
    
    For the second statement, take any $\firm \in \mathcal{P}_{n}$, $n=1,2,\ldots, N$. By definition, for every $\gamma \in \Gamma_{\mathcal{P}}$, we have $\abs{{W}^{\#}_{\gamma(\firm)}}<  Q(\firms\backslash \bigcup_{k=0}^{n-1}\mathcal{P}_k)$. This means
    \begin{align*}
        {u}_{\firm}\big(\,\hat{m}(\type,\gamma),\type\,\big)&= \max_{W\subseteq \workers\backslash{W}^{\#}_{\gamma(\firm)}, |W|\le q_{\firm}} \hchi(\firm,W,\type)\\
        &\geq \min_{W' \subseteq \workers, \abs{W'} \leq Q(\firms\backslash \bigcup_{k=0}^{n-1}\mathcal{P}_k)}\, \max_{ W \subseteq \workers\backslash W' , \abs{W}\leq q_{\firm}} \,\hchi({\firm},W, \type)\\
        &=\overline{u}_\firm(\type), \quad \forall \type\in\types,
    \end{align*}
    where the last equality comes from the definition of $\mathcal{P}_{n}$. Taking expectation over $\types$ gives $\exp_{\pi}\big[{u}_{\firm}\big(\,\hat{m}(\type,\gamma),\type\,\big)\big] = \overline{u}^*_\firm$, which suffices for the second statement to hold.
\end{proof}
	
\end{lemma}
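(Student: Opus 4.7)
My plan is to extend the first-order stochastic dominance argument used for \cref{lemma: nonemptiness} to the restricted-ordering RSD, treating the two halves of the lemma separately. The backbone is the same inequality that appears in (\ref{equation: serial dictatorship 2}), but now applied with respect to the effective minmax $\underline{u}_\firm(\firms',\type)$ for the relevant subset $\firms'$ of firms, together with the characterizing property of the hierarchical partition.

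\textbf{Part 1: firms in $\mathcal{R}$.} For any $\gamma\in\Gamma_{\mathcal{P}}$ and any $\firm\in\mathcal{R}$, the ordering places $\firm$ in one of the first $|\mathcal{R}|$ positions, so when $\firm$ selects its employees the set of workers already taken has cardinality at most $Q(\mathcal{R})-q_\firm<Q(\mathcal{R})$. By the same manipulation as in the proof of \cref{lemma: nonemptiness} (now invoking the generalized characterization of $\underline{u}_\firm(\mathcal{R},\type)$ analogous to \cref{lemma: equiv minmax}), this gives $u_\firm(\hat{m}(\type,\gamma),\type)\ge \underline{u}_\firm(\mathcal{R},\type)$ for every $\gamma$ and $\type$. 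Furthermore, when $\gamma(\firm)=1$ (which happens with positive probability inside $\Gamma_{\mathcal{P}}$ because all of $\mathcal{R}$ is ranked above $\firms\setminus\mathcal{R}$), $\firm$ selects from all of $\workers$ and achieves $\overline{u}_\firm(\type)$. Since $\firm\in\mathcal{R}$ means $\overline{u}_\firm(\type)>\underline{u}_\firm(\mathcal{R},\type)$ for every $\type$ with $\pi(\type)>0$, the weak inequality becomes strict after averaging over $\gamma$ and taking expectation over $\type$.

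\textbf{Part 2: firms in $\firms\setminus\mathcal{R}$.} Fix $\firm\in\mathcal{P}_n$. The restriction to $\Gamma_{\mathcal{P}}$ means the only firms that can precede $\firm$ are those in $\mathcal{R}\cup\mathcal{P}_N\cup\cdots\cup\mathcal{P}_n$ (excluding $\firm$ itself); hence the number of workers already taken when $\firm$ selects is strictly less than $Q\big(\firms\setminus\bigcup_{k=0}^{n-1}\mathcal{P}_k\big)$. Exactly as in Part~1, this yields
\[
u_\firm(\hat{m}(\type,\gamma),\type)\;\ge\;\underline{u}_\firm\Big(\firms\setminus\bigcup_{k=0}^{n-1}\mathcal{P}_k,\type\Big)\;=\;\overline{u}_\firm(\type),
\]
where the equality is precisely the defining property of $\mathcal{P}_n$. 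Since $\overline{u}_\firm(\type)$ is also an upper bound on any stage payoff, equality must hold pointwise, and integrating against $\pi$ gives exactly $\overline{u}^*_\firm$.

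\textbf{Main obstacle.} The only nontrivial bookkeeping is verifying the counting bound $|W^{\#}_{\gamma(\firm)-1}|\le Q(\firms\setminus\bigcup_{k=0}^{n-1}\mathcal{P}_k)-q_\firm$ for $\firm\in\mathcal{P}_n$; this requires carefully unpacking the definition of $\Gamma_{\mathcal{P}}$ (higher-indexed partition blocks come earlier) and then invoking a generalized analogue of \cref{lemma: equiv minmax} that characterizes the effective minmax $\underline{u}_\firm(\firms',\type)$ when only firms in $\firms'$ minmax $\firm$. Once that characterization is in hand, both halves of the lemma reduce to the same adversarial-removal comparison used in \cref{lemma: nonemptiness}.
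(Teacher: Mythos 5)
Your proposal is correct and follows essentially the same route as the paper: the first half replicates the stochastic-dominance argument of \cref{lemma: nonemptiness} with $Q$ replaced by $Q(\mathcal{R})$ and the effective minmax, and the second half uses exactly the counting bound $|W^{\#}_{\gamma(\firm)-1}| < Q\big(\firms\setminus\bigcup_{k=0}^{n-1}\mathcal{P}_k\big)$ together with the defining equality of $\mathcal{P}_n$ to sandwich the payoff at $\overline{u}_\firm(\type)$. Your explicit remark that the lower bound coincides with the upper bound $\overline{u}_\firm(\type)$, forcing pointwise equality, is a slightly more careful rendering of the step the paper compresses into "taking expectation gives $\overline{u}^*_\firm$."
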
 

In view of \cref{lemma: nonemptiness general}, we have established \cref{prop: existence} without \cref{assumption: generic}. That is, a self-enforcing matching process exists when firms are sufficiently patient.

\end{document}